\documentclass[12pt]{amsart}
\usepackage[latin1]{inputenc}
\usepackage{mathrsfs}
\usepackage{mathtools}
\usepackage{amsmath}
\usepackage{amsfonts}
\usepackage{amssymb}
\usepackage{graphicx}
\usepackage{fourier}
\usepackage{dutchcal}
\usepackage[colorlinks=true,citecolor=green,linkcolor=blue]{hyperref}
\usepackage{enumerate}
\usepackage{esint}
\usepackage{bm}
\usepackage{xcolor} 
\usepackage{verbatim}        
\usepackage{float}

\usepackage{subcaption}
\usepackage{epstopdf}

\usepackage{scalerel}

\usepackage[margin=1in]{geometry}

\usepackage[english]{babel}
\newtheorem{theorem}{Theorem}[section]
\newtheorem{lemma}[theorem]{Lemma}

\theoremstyle{definition}
\newtheorem{definition}[theorem]{Definition}

\newtheorem{corollary}[theorem]{Corollary}

\theoremstyle{remark}
\newtheorem{remark}[theorem]{Remark}

\newcommand{\abs}[1]{\left\lvert#1\right\rvert}
\newcommand{\pa}[1]{\left( #1 \right)}
\newcommand{\rpa}[1]{\left[ #1 \right]}
\newcommand{\br}[1]{\left\lbrace #1\right\rbrace}

\newcommand{\R}{\mathbb{R}}

\newcommand{\Z}{\mathbb{Z}}
\newcommand{\N}{\mathbb{N}}
\newcommand{\Sp}{\mathbb{S}}
\newcommand{\PP}{\mathscr{P}}
\newcommand{\X}{\mathcal{X}}
\newcommand{\wh}{\widehat}
\newcommand{\wt}{\widetilde}
\newcommand{\F}{\mathcal{F}}
\newcommand{\g}{\wh{g_N}}

\newcommand{\T}{\mathcal{T}}

\numberwithin{equation}{section}

\definecolor{bostonuniversityred}{rgb}{0.8, 0.0, 0.0}
 
\definecolor{byzantium}{rgb}{0.44, 0.16, 0.39}

\raggedbottom


\title{The emergence of order in many element systems}
\author[A. Einav]{Amit Einav}
\address{Amit Einav \hfill\break 
	School of Mathematical Sciences, Durham University, Upper Mountjoy Campus, Stockton Road DH1 3LE Durham, United Kingdom}
\email{amit.einav@durham.ac.uk}

\begin{document}
\maketitle

\begin{abstract}
	Our work is dedicated to the introduction and investigation of a new asymptotic correlation relation in the field of mean field models and limits. This new notion, order (as opposed to chaos), revolves around a tendency for self organisation in a given system and is expected to be observed in biological and societal models. Beyond the definition of this new notion, our work will show its applicability, and propagation, in the so-called Choose the Leader model.
\end{abstract}
{\tiny
	{KEYWORDS:}
	Mean field limits, asymptotic correlation, order}
\\{\indent\tiny{MSC Subject Classification:}
	82C22, 60F99, 92-10, 35Q82, 35B40}

\section{Introduction}\label{sec:intro}
\subsection{Background and the mean field limit approach}\label{subsec:background_and_MFL}
Systems that involve many elements, be it a gas of particles or a herd of animals, are ubiquitous in our day to day lives. It is no wonder, then, that we are fascinated with their investigation and try to model and investigate the phenomena that define and evolve such systems. 

Historically, we have three possible approaches to consider when dealing with such systems:\\
\textbf{Microscopic approach} in which we consider every element as an individual and find their tracjectories by solving a (most likely than not) coupled system of ODEs. This approach is the most accurate of the three, but also the most untenable due to the difficulty in solving such high number of coupled equations. \\
\textbf{Macroscopic approach} in which we ``zoom'' out, both in space and time, and investigate the resulting ``fluid''. This method gets rid of statistically insignificant phenomena which we won't see in the behaviour of the ensemble as a whole. The equations we consider in this case describe the evolution of the (physical) density of the resulting fluid.\\
\textbf{Mesoscopic approach} which combines the ``best of both worlds'' from the previous two approaches. The mesoscopic approach considers an \textit{average element} of the system and how it evolves, trying to keep the microscopic picture while considering only statistically significant phenomena.  

The mesoscopic approach was first introduced around the late 19th century during the golden age of the mathematical and physical investigation of the kinetic theory of gases. It has since outgrown its initial setting and is now used to describe a plethora of physical, chemical, biological and even societal and economical phenomena. 

While nowadays we have many tools to solve mesoscopic equations, which are usually non-linear by nature, one of the main problems we encounter when dealing with these equations is the question of their relationship to the (more established) microscopic setting. A prime example to this issue, and what is now known as Hilbert's 6th problem, is the question of whether or not one can show that the famous Boltzmann equation can be attained from the equations describing the motion of particles in a dilute gas. While a partial solution to this question was given in the 1975 work of Lanford \cite{Lanford1975}, a result that was recently revisited in the work of Gallagher, Saint-Raymond, and Texier \cite{GSRT2013}, we are still lacking a full answer. The search for an answer to this problem, however, helped pave the way to a new and extremely potent idea - the idea of mean field limits.

In his 1956 work, \cite{K1956}, Mark Kac has suggested a different approach to tackle the issue of the validity of the Boltzmann equation. Kac has proposed to provide a \textit{probabilistic justification} to it, instead of an exact derivation, by considering the evolution of an ``average'' model of a dilute gas that consists of $N$ particles which undergo binary collisions. 

Mathematically, Kac's model (or Kac's walk) is a jump process which describes the evolution of the probability density of an ensemble of particles. The symmetric probability density of the ensemble\footnote{The symmetry of the probability density is necessary and intuitive - if we are considering an average model we shouldn't be able to distinguish between the elements.},$F_N$, which is defined on $\pa{\mathbb{S}^{N-1}\pa{\sqrt{N}},d\sigma_N}$ where $d\sigma_N$ is the uniform probability measure on the $(N-1)$-dimensional sphere of radius $\sqrt{N}$, $\mathbb{S}^{N-1}\pa{\sqrt{N}}$, satisfies the so-called master equation
\begin{equation}\nonumber 
	\partial_t F_N \pa{\bm{V}_N,t}=\mathcal{L}_N F_N \left(\bm{V}_N ,t\right)= N(\mathcal{Q}-I)F_N\left(\bm{V}_N ,t\right), 
\end{equation}
where $\bm{V}_N=\pa{v_1,\dots,v_N}\in \mathbb{S}^{N-1}\pa{\sqrt{N}}$ and the collision operator, $\mathcal{Q}$, is given by
$$\mathcal{Q}F\pa{\bm{V}_N}=\frac{1}{\left(\begin{tabular}{ c}
		$N$ \\
		$2$
	\end{tabular}\right)}\sum_{i<j}\frac{1}{2\pi} \int_{0}^{2\pi}F_N\left(R_{i,j,\theta}\pa{\bm{V}_N}\right)d\theta,$$
with 
\begin{equation}\label{eq:kac-velocity}
	\pa{R_{i,j,\theta}\pa{\bm{V}_N}}_{l}=\begin{cases}
		v_l & l\not=i,j,\\
		v_i\pa{\theta}=v_{i}\cos\pa{\theta}+v_{j}\sin\pa{\theta}, & l=i,\\
		v_j\pa{\theta}=-v_{i}\sin\pa{\theta}+v_{j}\cos\pa{\theta}, & l=j.
	\end{cases}
\end{equation}

Boltzmann's equation, Kac's surmised, should arise as a limit, in some sense, of the evolution equation for the first marginal of $F_N$, $F_{N,1}$, which represents the behaviour of an average particle in the system. A simple calculation shows that 
\begin{equation}\label{eq:kac_pre_mean_field}
	\partial_t F_{N,1}(v) =\frac{1}{\pi} \int_{-\pi}^{\pi}\int_{\mathbb{R}} \left(F_{N,2}\pa{v(\theta),w(\theta)}-F_{N,2}(v,w)\right)dwd\theta ,
\end{equation}
where $v\pa{\theta}$ and $w\pa{\theta}$ are given by the same formula as that which defines $v_i\pa{\theta}$ and $v_j\pa{\theta}$ in \eqref{eq:kac-velocity}.
Equation \eqref{eq:kac_pre_mean_field} is not very surprising as we expect that the evolution of an average particle will be affected by its interaction with another particle, represented by the second marginal $F_{N,2}$. Equation \eqref{eq:kac_pre_mean_field} is not closed, and if one attempts to find the equation for $F_{N,2}$ they will find that it depends on the third marginal, $F_{N,3}$. One can continue this way and find the so-called \textit{BBGKY}\footnote{Bogoliubov-Born-Green-Kirkwood-Yvon.} hierarchy, which ends with the original master equation.

At this point in his analysis Kac introduced a truly novel idea which was inspired by the original work of Boltzmann. Kac realised that the model we discussed above didn't fully take into account the fact that the gas we are considering is dilute. The dilutness implies that we expect that any two given particles have very small chance to collide with one another and the more particles we have in the system -- the smaller the chance is. Intuitively speaking, what we expect is that as $N$ increases the \textit{particles become more and more independent}. In other word, for any fixed $k\in\N$ we have that the $k-$th marginal of $F_N$, $F_{N,k}$, which represents the behaviour of a group of $k$ random particles, will become more tensorised with respect to a limiting function, $f$, which represents the limiting behaviour of one average particle:
\begin{equation}\nonumber 
	\begin{split}
		&F_{N,1}(v_1)\underset{N\text{ large}}{\approx} f(v_1),\\
		&F_{N,2}(v_1,v_2)\underset{N\text{ large}}{\approx} f(v_1)f(v_2), \\
		&\vdots\\
		&F_{N,k}(\bm{V}_k)\underset{N\text{ large}}{\approx} f^{\otimes k}\pa{\bm{V}_k}.
	\end{split}
\end{equation}

Kac has defined the above property, which we now call \textit{(molecular) chaos} or \textit{chaoticity}, rigorously. This new notion provided Kac with the ``closure condition'' needed to take a limit in \eqref{eq:kac_pre_mean_field}. Kac has shown that his model remains chaotic if it starts as such, which is known as \textit{propagation of chaos}, and that the generating function for the evolved probability density satisfies the famous Boltzmann-Kac equation 
\begin{equation}\nonumber 
	\partial_t f(v) = \frac{1}{\pi}\int_{-\pi}^{\pi}\int_{\R} \left(f(v(\theta))f(w(\theta))-f(v_1)f(v_2) \right) dwd\theta,
\end{equation}
in the limit when $N$ goes to infinity. While Kac's original model only considered the case where the velocities of the particles in the ensemble are assumed to be one dimensional, the above has been extended to higher dimensions and more realistic models where the resulting mean field equation is precisely the Boltzmann equation (see, for instance, \cite{Mckean1967}). 

Kac's model and approach have had ramification beyond their immediate success -- ushering the field of mean field models and limits. We notice that his procedure relied on exactly two ingredients:
\begin{itemize}
	\item \textit{An average model for a system of interacting elements.} In our context this is an evolution equation for the probability density of the ensemble of elements\footnote{There are various ways to attain many elements models -- they can arise as the Liouville/master equation of the ensemble following deterministic or probabilistic trajectorial equations, or as a proposed master equation that relies on probabilistic reasoning. The former case is usually explored using the empirical measure and the trajectorial equations include interactions between \textit{all} the  elements in the system, scaled by a factor of the number of the elements, while the latter is based on ideas presented in our discussion of Kac's model where each interaction occurs only between two random elements (we refer to such models as mean field models).}.
	\item \textit{An asymptotic correlation relation.} This relation expresses the emerging phenomena we expect to get as the number of elements goes to infinity. For Kac's model this relation was chaoticity. 
\end{itemize}

The simplicity of the above approach, sometimes called the \textit{mean field limit approach}, opened the flood gate to the investigation of various many element models which, in recent decades, permeated into the realms of biology, chemical interactions and even sociology -- with examples which include swarming of animals, neural networks, and consensus amongst people (see \cite{APD2021,BFFT2012,BCC2011} as well as the review paper \cite{CD(II)2022} and references within).

It may come as a surprise that while the mean field limit approach is used in various settings, the \textit{only} asymptotic correlation used to this day is that of chaoticity. This, however, doesn't seem appropriate in many biological and societal situations where we expect more dependence than independence between the underlying elements. This suspicion has been confirmed in recent works of Carlen, Chatelin, Degond, and Wennberg \cite{CCDW2013,CDW2013} who have constructed an animal based model which, after appropriate scaling, deviates from chaoticity. The need for a different type of asymptotic correlation is the beginning of this work.


\subsection{Chaos, order, and Choose the Leader model}\label{subsec:chaos_order_CL}

We start this subsection by describing the Choose the Leader model, or CL model in short, introduced in the works of Carlen, Chatelin, Degond, and Wennberg \cite{CCDW2013,CDW2013}. This model will motivate our definition of a new asymptotic correlation relation - \textit{order}.

The CL model is, similarly to Kac's model, a velocity based pair-interaction jump process that describes the evolution of a system that revolves around a herd of animals or a biological swarm. 

The model consists of $N$ animals who move in a planar domain. The velocity of each individual is assumed to be of magnitude $1$ and as such can be considered to be an element of the circle $\mathbb{S}^1$. At a random time, given by a Poisson stream with a rate $\lambda>0$, a pair of animals is chosen at random uniformly amongst all the animals and undergoes a ``collision'': one of the animals, again chosen at random uniformly between the two, adapts its velocity to the second animal up to a small amount of ``noise''. Mathematically, this means that if the $i-$th and $j-$th animal interacted and the $j-$th animal decided to follow the $i-$th animal, we have that post collision 
$$\pa{v_i,v_j}\longrightarrow \pa{v_i,Zv_i},$$
where $Z$ is an independent random variable with values on $\mathbb{S}^1$ and a given density function $g$, and where we have used the notation $vw$ to indicate the velocity $e^{i\pa{\mathrm{Arg}(v)+\mathrm{Arg}(w)}}$, considering elements in $\mathbb{S}^1$ to be of the form $e^{i\theta}$.

Following on the above convention on $\mathbb{S}^1$ we can replace the velocity variables with their respective angle on the circle and conclude that the state space of the model is the $N$-dimensional torus, $\mathcal{T}^N=\rpa{-\pi,\pi}^N$ (with the appropriate identification of the end points of the intervals), and that the master equation of the above process, i.e. the equation for the probability density of the ensemble on $\T^N$ with respect to the underlying probability measure $\frac{d\theta_1\dots d\theta_N}{\pa{2\pi}^N}$, is given by
\begin{equation}\label{eq:master_CL}
	\begin{split}
		\partial_t F_N\pa{\theta_1,\dots,\theta_N} =& \frac{2\lambda}{N-1}\sum_{i<j}\Bigg\{\frac{g\pa{\theta_i-\theta_j}}{2}\Bigg( \rpa{F_N}_{\wt{j}}\pa{\theta_1,\dots,\wt{\theta}_{j},\dots,\theta_N}\\
		&+\rpa{F_N}_{\wt{i}}\pa{\theta_1,\dots,\wt{\theta}_{i},\dots,\theta_N}\Bigg)
		-F_N\pa{\theta_1,\dots,\theta_N}\Bigg\}.
	\end{split}
\end{equation}
with
\begin{equation}\nonumber 
	\rpa{F_N}_{\wt{j}}\pa{\theta_1,\dots,\wt{\theta}_{j},\dots,\theta_N}=\int_{-\pi}^{\pi}F_N\pa{\theta_1,\dots,\theta_N}\frac{d\theta_j}{2\pi}. 
\end{equation}
where we have used the notation $\pa{\theta_1,\dots,\wt{\theta}_{j},\dots,\theta_N}$ for the $(N-1)$-dimensional vector which is attained by removing $\theta_j$ from the original $N$-dimensional vector $\pa{\theta_1,\dots,\theta_N}$. We will continue and use this notation throughout this paper.

From the description of the CL model it seems that as times passes more meetings between the animals of the herd will happen and consequently greater overall mutual adherence will be observed. The emergence of these correlation, however, may strongly depend on the number of animals. Indeed, the more animals we have the less likely it is that any two given animals will meet -- increasing the time we'll have to wait before we see any emerging pattern.

In their two papers  \cite{CCDW2013,CDW2013} Carlen et al. have addressed this issue. They showed that chaos does propagate on every fixed time interval, but is broken when we rescale our time variable \textit{as well as the noise intensity $g$}. While seemingly odd, we shouldn't be surprised that the deviation from the adherence of the velocities may also depend on the number of the animals when we think of biological/societal settings -- it can be, for example, that the more animals we have, the more anxious they get and consequently they align themselves more closely when they meet. 

This intuitive idea of adherence motivates our upcoming definition of order (Definition \ref{def:order}) but before we move to it, and for the sake of completeness, we remind the reader the general definition of chaoticity:
\begin{definition}\label{def:chaos}
	Let $\mathcal{X}$ be a Polish space. We say that a sequence of symmetric probability measures, $\mu_N\in \PP\pa{\mathcal{X}^N}$ with $N\in\N$, is $\mu_0-$chaotic for some probability measure $\mu_0\in \PP\pa{\mathcal{X}}$ if for any $k\in\N$
	$$\Pi_k\mu_{N} \underset{N\to\infty}{\overset{\text{weak}}{\longrightarrow}}\mu_0^{\otimes k}$$
	where $\Pi_k \mu_N$ is the $k-$the marginal of $\mu_N$. The weak convergence in the above refers to convergence when integrating against bounded continuous functions. 
\end{definition}

It is worth to mention at this point that there are various notions of chaoticity. We refer the interested reader to \cite{HM2014} for more information.

Carlen, Degond and Wennberg have shown the propagation of chaos in general pair-interaction models in \cite{CDW2013}. In particular they have proved the following:
\begin{theorem}\label{thm:CDE_chaos}
	Assume that $\br{F_{N}(0)}_{N\in\N}$ is $f$-chaotic. Then for any $t>0$ the solution to the CL master equation \eqref{eq:master_CL} with initial datum $\br{F_{N}(0)}_{N\in\N}$, $\br{F_{N}(t)}_{N\in\N}$, is $f(t)$-chaotic. Moreover, $f\pa{t}$ satisfies the equation
	\begin{equation}\nonumber
		\partial_t f\pa{\theta,t} = \pa{g\ast f}\pa{\theta,t} - f\pa{\theta,t}.
	\end{equation}
\end{theorem}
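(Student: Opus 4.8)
The plan is to descend to the hierarchy of marginals $F_{N,k}$ and to exploit a feature special to the ``choose the leader'' dynamics: the evolution equation for $F_{N,k}$ turns out to be \emph{closed}, up to an error of size $O(k^2/N)$, rather than coupling to $F_{N,k+1}$ the way Kac's collisional hierarchy \eqref{eq:kac_pre_mean_field} does. The reason is that in a ``follow'' event the follower's velocity is re-randomised, hence simply integrated away, so an interaction between an ``inside'' animal ($i\le k$) and an ``outside'' one ($j>k$) does not enlarge the set of relevant variables. Granting this, propagation of chaos follows from a soft Duhamel estimate run on the Fourier side of $\T^k$, where the limiting generator acts diagonally; no uniqueness theorem for a limiting hierarchy is needed.

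First I would derive the marginal equation. Integrating \eqref{eq:master_CL} against $\tfrac{d\theta_{k+1}\cdots d\theta_N}{(2\pi)^{N-k}}$ and splitting $\sum_{i<j}$ into the pairs with $i<j\le k$, with $i\le k<j$, and with $k<i<j$, a direct computation using $\int_{-\pi}^{\pi}g(\theta)\tfrac{d\theta}{2\pi}=1$ and the symmetry of $F_N$ shows that the last family contributes nothing, that each pair with $i\le k<j$ contributes $\tfrac12\big[(g\ast_i F_{N,k})-F_{N,k}\big]$ (with $g\ast_i$ the convolution in the $i$-th slot), and that each pair with $i<j\le k$ contributes $\tfrac{g(\theta_i-\theta_j)}{2}\big(\rpa{F_{N,k}}_{\wt j}+\rpa{F_{N,k}}_{\wt i}\big)-F_{N,k}$. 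Summing over pairs and writing $\tfrac{\lambda(N-k)}{N-1}=\lambda-\tfrac{\lambda(k-1)}{N-1}$,
\begin{equation}\nonumber
\partial_t F_{N,k}=\lambda\sum_{i=1}^{k}\big[(g\ast_i F_{N,k})-F_{N,k}\big]+R_{N,k},
\end{equation}
where $R_{N,k}$ collects the $\binom{k}{2}$ terms of the first family (each with prefactor $\tfrac{2\lambda}{N-1}$) together with the $\tfrac{\lambda(k-1)}{N-1}$-correction. Since $F_N(t)$ is a probability density, $\|F_{N,k}(t)\|_{L^1(\T^k)}=1$ and $\|g\ast_i F_{N,k}(t)\|_{L^1}\le1$, whence $\|R_{N,k}(t)\|_{L^1}\le C\lambda k^2/N$ for all $t\ge0$. (For $k=1$ there is no remainder, giving the closed, $N$-independent equation $\partial_t F_{N,1}=\lambda\big((g\ast F_{N,1})-F_{N,1}\big)$ — the advertised equation for $f$ under the normalisation $\lambda=1$ — which has a unique probability-density solution because $g\ast\,\cdot\,$ is a Markov operator.)

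Next, set $\mathcal{A}_k G:=\lambda\sum_{i=1}^{k}\big[(g\ast_i G)-G\big]$, so that the marginal equation reads $\partial_t F_{N,k}=\mathcal{A}_k F_{N,k}+R_{N,k}$. On the Fourier side of $\T^k$, $\mathcal{A}_k$ is the multiplier $a(\bm n):=\lambda\sum_{i=1}^{k}(\wh g(n_i)-1)$ ($\bm n\in\Z^k$), and since $g\ge0$ has total mass $1$ we have $|\wh g(n)|\le1$, so $\mathrm{Re}\,a(\bm n)\le0$ and $|e^{ta(\bm n)}|\le1$. Moreover, differentiating $f(t)^{\otimes k}$ by the Leibniz rule and using that $f$ solves its own equation gives $\partial_t\big(f(t)^{\otimes k}\big)=\mathcal{A}_k\big(f(t)^{\otimes k}\big)$, i.e.\ $\wh{f(t)^{\otimes k}}(\bm n)=e^{ta(\bm n)}\prod_{i=1}^{k}\wh f(n_i,0)$. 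Subtracting the evolution of $\wh{F_{N,k}}(\bm n,\cdot)$ from that of $\wh{f(\cdot)^{\otimes k}}(\bm n)$, solving the resulting scalar linear ODE in $t$, and using $|e^{ta(\bm n)}|\le1$ together with $|\wh{R_{N,k}}(\bm n,s)|\le\|R_{N,k}(s)\|_{L^1}\le C\lambda k^2/N$,
\begin{equation}\nonumber
\Big|\wh{F_{N,k}}(\bm n,t)-\wh{f(t)^{\otimes k}}(\bm n)\Big|\;\le\;\Big|\wh{F_{N,k}}(\bm n,0)-\wh{f^{\otimes k}}(\bm n)\Big|+\frac{C\lambda k^2\,t}{N}.
\end{equation}
On the compact torus, weak convergence of probability measures is equivalent to pointwise convergence of all Fourier coefficients (one direction is testing against the characters $e^{i\bm n\cdot\bm\theta}$; the converse is Prokhorov tightness together with the fact that Fourier coefficients determine a measure). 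Hence the $f$-chaoticity of $\{F_N(0)\}$ is precisely the statement that $\wh{F_{N,k}}(\bm n,0)\to\wh{f^{\otimes k}}(\bm n)$ for every $k$ and every $\bm n$, and the displayed bound then forces $\wh{F_{N,k}}(\bm n,t)\to\wh{f(t)^{\otimes k}}(\bm n)$ for every $k$ and $\bm n$; that is, $\{F_N(t)\}$ is $f(t)$-chaotic. The ``moreover'' clause is part of the construction: $f(t)$ solves $\partial_t f=(g\ast f)-f$ with $f(0)$ the chaoticity profile of $\{F_N(0)\}$.

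The only real work is Step 1 — the combinatorial bookkeeping of the marginal equation, in particular verifying that the inside--outside pairs close on $F_{N,k}$ and do not drag in $F_{N,k+1}$, which is exactly the structural point distinguishing the CL model from a Kac-type collision model. Everything afterwards is soft. Two minor caveats: differentiating $t\mapsto F_N(t)$ in Step 1 requires mild regularity (immediate when $g\in L^\infty$, so that $\mathcal{L}_N$ is bounded, and otherwise supplied by routine semigroup theory), and one should record that $f(t)^{\otimes k}$ is genuinely a probability density — so that ``$f(t)$-chaotic'' is meaningful — which holds because $\partial_t f=\lambda\big((g\ast f)-f\big)$ preserves positivity and mass.
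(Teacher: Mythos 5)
Your proof is correct. Note first that the paper itself does not prove Theorem \ref{thm:CDE_chaos} --- it is imported from Carlen--Degond--Wennberg \cite{CDW2013}, where propagation of chaos is established for a general class of pair-interaction driven master equations --- so there is no in-paper argument to compare against line by line. What you have written is a self-contained and correct proof for the CL case, and it is built on exactly the machinery the paper later develops for its own main theorems: your Step 1 is the unrescaled version of the paper's BBGKY hierarchy \eqref{eq:hierarchy} (your verification that the inside--outside pairs close on $F_{N,k}$, contributing $\tfrac12\rpa{(g\ast_i F_{N,k})-F_{N,k}}$ per pair, is the same computation, and is the genuinely model-specific point); your Fourier diagonalisation of $\mathcal{A}_k$ is the content of Lemma \ref{lem:fourier_of_F_N_k}; and the equivalence of weak convergence with pointwise convergence of Fourier coefficients is Lemma \ref{lem:weak_convergence_and_fourier}. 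Your quantitative Duhamel bound $\abs{\wh{F_{N,k}}(\bm{n},t)-\wh{f(t)^{\otimes k}}(\bm{n})}\leq \abs{\wh{F_{N,k}}(\bm{n},0)-\wh{f^{\otimes k}}(\bm{n})}+C\lambda k^2 t/N$ is in fact somewhat sharper than the qualitative statement being proved, and it makes transparent why chaos can only fail after times of order $N$ --- which is precisely the regime the rest of the paper studies. Two small remarks: you are right that the limit equation should carry a factor $\lambda$ (i.e.\ $\partial_t f=\lambda\pa{(g\ast f)-f}$), so the theorem as displayed implicitly normalises $\lambda=1$; and your caveat about $g\in L^\infty$ is unnecessary, since convolution against a probability density and partial integration are already $L^1$-contractions, so $\mathcal{L}_N$ is bounded on $L^1\pa{\T^N}$ for any $g\in\PP\pa{\T}$, exactly as the paper notes in its footnote on well-posedness.
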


As was mentioned before, the breaking of chaoticity is achieved by rescaling the time and intensity of the interaction in \eqref{eq:master_CL}. The time would naturally be rescaled by a factor of $N$ to guarantee that in a (rescaled) unit time \textit{all pairs of animals have interacted once}. The scaling of the interaction, on the other hand, is motivated from a standard scaling on the line -- restricted to $\rpa{-\pi,\pi}$:
\begin{definition}\label{def:g_eps}
	Given a symmetric probability density on $\R$ with respect to the Lebesgue measure $dx$, $g$, and a scaling parameter $\epsilon>0$ we define the rescaled and restricted probability density on $\T$ with respect to the underlying probability measure $\frac{d\theta}{2\pi}$, $g_\epsilon$, by 
	\begin{equation}\nonumber 
		g_\epsilon\pa{\theta} = \frac{1}{\epsilon\widetilde{g}_{\epsilon}}g\pa{\frac{\theta}{\epsilon}}
	\end{equation}
	where
	\begin{equation}\nonumber 
		\widetilde{g}_{\epsilon} =\frac{1}{2\pi}\int^{\frac{\pi}{\epsilon}}_{-\frac{\pi}{\epsilon}}g(x)dx.
	\end{equation}
\end{definition}

We will assume from this point onwards that the probability density of our interaction in the CL model is of the form described above and that its ``generator'', $g$, is a symmetric probability density with at least a finite third moment. 

To simplify the presentation of what is to follow we will write $f\in \PP\pa{\X,\mu}$ when $f$ is a probability density on $\X$ with respect to the underlying measure $\mu$. We will shorten the above notation and say that $f\in\PP\pa{\X}$ when $\mu$ is clear from the setting. In the remainder of our work we will consider the spaces $\T^k$ with the inherent measure $\frac{d\theta_1\dots d\theta_k}{\pa{2\pi}^k}$, where $k\in\N$. 

\smallskip

Following the time rescaling $t^\prime=\frac{t}{N}$ (which we will still denote as $t$) and allowing the interaction scaling parameter to depend on $N$, i.e. considering $\epsilon=\epsilon_N$ in Definition \ref{def:g_eps}, we attain the general rescaled CL master equation: 
\begin{equation}\label{eq:master_CL_rescaled}
	\begin{split}
		\partial_t F_N\pa{\theta_1,\dots,\theta_N} =& \frac{2\lambda N}{N-1}\sum_{i<j}\Bigg\{\frac{g_{\epsilon_N}\pa{\theta_i-\theta_j}}{2}\Bigg( \rpa{F_N}_{\wt{j}}\pa{\theta_1,\dots,\wt{\theta}_{j},\dots,\theta_N}\\
		&+\rpa{F_N}_{\wt{i}}\pa{\theta_1,\dots,\wt{\theta}_{i},\dots,\theta_N}\Bigg)
		-F_N\pa{\theta_1,\dots,\theta_N}\Bigg\}.
	\end{split}
\end{equation}

Carlen, Chatelin, Degond, and Wennberg  have shown the following in \cite{CCDW2013}:
\begin{theorem}\label{thm:CCDW_no_chaos}
	Consider the rescaled CL master equations \eqref{eq:master_CL_rescaled} with $\epsilon_N=\frac{1}{\sqrt{N}}$ and let $\br{F_{N}(t)}_{N\in\N}$ be the family of their solutions. If $\br{F_{N,k}\pa{t}}_{N\in\N}$ converges weakly to a family $\br{f_k(t)}_{k\in\N}$ when $N$ goes to infinity for any $k\in \N$ and $t>0$ then $\br{f_k(t)}_{k\in\N}$ is \textit{not} chaotic, i.e. $f_k(t) \not= f_1^{\otimes k}(t)$ for $k\geq 2$. 
\end{theorem}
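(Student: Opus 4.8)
The plan is to establish the first two equations of the marginal hierarchy attached to the rescaled master equation \eqref{eq:master_CL_rescaled}, identify their $N\to\infty$ limits under the scaling $\epsilon_N=1/\sqrt N$, and show that the resulting limiting equations are incompatible with tensorisation. \emph{Step 1 (a closed two-marginal hierarchy).} Integrating \eqref{eq:master_CL_rescaled} against $\tfrac{d\theta_2\cdots d\theta_N}{(2\pi)^{N-1}}$ and against $\tfrac{d\theta_3\cdots d\theta_N}{(2\pi)^{N-2}}$ and exploiting the symmetry of $F_N$, one checks that every collision term which does not involve the first variable (resp. the first two variables) integrates to zero, and that the surviving terms produce only $F_{N,1}$ and $F_{N,2}$. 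Thanks to the ``choose the leader'' structure the pair $(F_{N,1},F_{N,2})$ therefore solves a \emph{closed} system,
\[
\partial_t F_{N,1} = \lambda N\bigl(g_{\epsilon_N}\ast F_{N,1}-F_{N,1}\bigr),
\]
\begin{align*}
\partial_t F_{N,2} ={}& \frac{2\lambda N}{N-1}\left[\frac{g_{\epsilon_N}(\theta_1-\theta_2)}{2}\bigl(F_{N,1}(\theta_1)+F_{N,1}(\theta_2)\bigr)-F_{N,2}\right]\\
&+\frac{\lambda N(N-2)}{N-1}\Bigl[\bigl(g_{\epsilon_N}\ast_1 F_{N,2}-F_{N,2}\bigr)+\bigl(g_{\epsilon_N}\ast_2 F_{N,2}-F_{N,2}\bigr)\Bigr],
\end{align*}
where $\ast$ is convolution on $\T$ with respect to $\tfrac{d\theta}{2\pi}$ and $\ast_i$ is the same convolution acting in the $i$-th variable. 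Since the generator of \eqref{eq:master_CL_rescaled} is a bounded operator on $L^1(\T^N)$ (the factor $g_{\epsilon_N}(\theta_i-\theta_j)$ always multiplies a function independent of $\theta_j$), $t\mapsto F_{N,k}(t)$ is smooth, so these identities may be integrated in $t$ and tested against smooth functions.

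\emph{Step 2 (grazing asymptotics).} With $\sigma^2:=\int_\R x^2 g(x)\,dx\in(0,\infty)$ — finite because $g$ has a finite third moment, positive because $g$ is a genuine Lebesgue density — a change of variables using $\widetilde g_{\epsilon}\to\tfrac{1}{2\pi}$, the symmetry $\int xg=0$, and Taylor's formula with the third-moment remainder give, for $h\in C^3(\T)$, that $\bigl\|N(g_{\epsilon_N}\ast h-h)-\tfrac{\sigma^2}{2}h''\bigr\|_\infty\to0$ as $N\to\infty$, together with $\bigl\|g_{\epsilon_N}\ast h-h\bigr\|_\infty\le C\|h''\|_\infty/N$; this is the only place the precise scaling $\epsilon_N=N^{-1/2}$ is used. \emph{Step 3 (the limiting hierarchy).} Fix $\phi\in C^\infty(\T^2)$ and $t>0$, integrate the $F_{N,2}$-equation in time, and move the convolutions onto $\phi$ by self-adjointness. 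By Step 2, $\tfrac{\lambda N(N-2)}{N-1}\bigl[(g_{\epsilon_N}\ast_1\phi-\phi)+(g_{\epsilon_N}\ast_2\phi-\phi)\bigr]\to\tfrac{\lambda\sigma^2}{2}(\partial^2_{\theta_1}+\partial^2_{\theta_2})\phi$ uniformly, while
\[
\int_{\T^2}g_{\epsilon_N}(\theta_1-\theta_2)F_{N,1}(\theta_1)\phi(\theta_1,\theta_2)\frac{d\theta_1 d\theta_2}{(2\pi)^2}=\int_\T F_{N,1}(\theta_1)\left(\int_\T g_{\epsilon_N}(\theta_1-\theta_2)\phi(\theta_1,\theta_2)\frac{d\theta_2}{2\pi}\right)\frac{d\theta_1}{2\pi},
\]
the inner integral converging uniformly to $\phi(\theta_1,\theta_1)$. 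All the $N$-dependent integrands are then bounded uniformly in $(s,N)$ (the marginals are probability measures and the operators applied to $\phi$ are uniformly bounded) and, by the hypothesis, converge pointwise in $s$; dominated convergence on $[0,t]$ yields that $f_1(t)$, $f_2(t)$ are weak solutions of
\[
\partial_t f_1=\tfrac{\lambda\sigma^2}{2}\partial^2_\theta f_1,\qquad \partial_t f_2=\tfrac{\lambda\sigma^2}{2}\bigl(\partial^2_{\theta_1}+\partial^2_{\theta_2}\bigr)f_2+2\lambda\bigl(\delta_{\{\theta_1=\theta_2\}}\,f_1-f_2\bigr),
\]
the second read in the weak sense $\int\phi\,df_2(t)-\int\phi\,df_2(0)=\int_0^t\!\bigl(\tfrac{\lambda\sigma^2}{2}\int(\partial^2_{\theta_1}+\partial^2_{\theta_2})\phi\,df_2+2\lambda\int_\T\phi(\theta,\theta)\,df_1-2\lambda\int\phi\,df_2\bigr)ds$. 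In particular $f_1(t)=e^{t\frac{\lambda\sigma^2}{2}\Delta_\T}f_1(0)$, so $\widehat{f_1(t)}(n)=e^{-\frac{\lambda\sigma^2}{2}n^2 t}\widehat{f_1(0)}(n)$; note that chaotic initial data is not needed anywhere.

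\emph{Step 4 (breakdown of chaos).} Testing the weak $f_2$-equation against $\phi(\theta_1,\theta_2)=e^{in(\theta_1-\theta_2)}$ and setting $a_n(t):=\int_{\T^2}e^{in(\theta_1-\theta_2)}\,df_2(t)$ (so $|a_n|\le1$, $a_0\equiv1$) gives the linear ODE $\dot a_n=-(\lambda\sigma^2 n^2+2\lambda)a_n+2\lambda$, hence, for fixed $t>0$ and $n\to\infty$,
\[
a_n(t)=\frac{2}{\sigma^2 n^2+2}+\Bigl(a_n(0)-\frac{2}{\sigma^2 n^2+2}\Bigr)e^{-(\lambda\sigma^2 n^2+2\lambda)t}=\frac{2}{\sigma^2 n^2}\bigl(1+o(1)\bigr),
\]
i.e.\ $a_n(t)$ decays exactly like $n^{-2}$. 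If on the contrary $f_2(t)=f_1(t)^{\otimes2}$, then $a_n(t)=|\widehat{f_1(t)}(n)|^2=e^{-\lambda\sigma^2 n^2 t}|\widehat{f_1(0)}(n)|^2$, which for $t>0$ decays faster than any power of $n$ — impossible for $n$ large. Hence $f_2(t)\ne f_1(t)^{\otimes2}$ for every $t>0$, and since $f_2$ is the $2$-marginal of every $f_k$ (marginalisation commutes with the weak limit), $f_k(t)\ne f_1(t)^{\otimes k}$ for all $k\ge2$ and $t>0$, which is the stated non-propagation of chaos.

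I expect Step 3 to be the main obstacle. The combinatorial cancellations of Step 1 are lengthy but mechanical, and Step 4 is a short computation; the delicate point in Step 3 is the passage to the limit itself, where one has only pointwise-in-$t$ weak convergence at one's disposal — which is precisely enough to run dominated convergence once the convolution operators have been transferred to the smooth test function — and, above all, the handling of the diagonal term $g_{\epsilon_N}(\theta_1-\theta_2)\bigl(F_{N,1}(\theta_1)+F_{N,1}(\theta_2)\bigr)$, whose weak limit is the singular measure carried by $\{\theta_1=\theta_2\}$ and is exactly the mechanism that destroys tensorisation.
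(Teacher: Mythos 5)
Your proof is correct. Note first that the paper does not actually prove Theorem \ref{thm:CCDW_no_chaos} --- it quotes it from \cite{CCDW2013} --- so the natural comparison is with the machinery the paper builds for its own Theorems \ref{thm:main_in_between} and \ref{thm:not_all_hope_is_lost}, which in fact contains everything needed. There the strategy is Fourier-first: the closed BBGKY hierarchy is transformed via Lemma \ref{lem:fourier_of_F_N_k}, the finite-$N$ ODEs are solved \emph{exactly} (Corollary \ref{cor:recursive}, formula \eqref{eq:expression_for_fourier_F_N_2}), and only then is $N\to\infty$ taken using $\lim_N N\pa{1-\wh{g_N}(n)}=\tfrac{m_2 n^2}{2}$, yielding precisely your
$$\lim_{N\to\infty}\wh{F_{N,2}}\pa{n,-n,t}=\frac{2}{m_2n^2+2}+\pa{\wh{f_2}\pa{n,-n}-\frac{2}{m_2n^2+2}}e^{-\lambda\pa{m_2n^2+2}t},$$
i.e.\ your $a_n(t)$ with $\sigma^2=m_2$. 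You reach the same formula by the dual route: stay in physical space, pass to the limit in the two-marginal hierarchy using the grazing asymptotics $N\pa{g_{\epsilon_N}\ast h-h}\to\tfrac{m_2}{2}h''$ and the concentration of $g_{\epsilon_N}$ on the diagonal, and then solve the limiting ODE for the Fourier mode $e^{in(\theta_1-\theta_2)}$. What the paper's route buys is that all limit/equation interchanges are trivial (one takes limits of explicit scalar formulae rather than of an evolution equation); what your route buys is that the singular gain term $f_1(\theta_1)\delta(\theta_1-\theta_2)$ --- the mechanism destroying tensorisation --- is visible already at the level of the limiting equation, and the argument generalises more readily to models without a closed hierarchy. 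Your concluding dichotomy (polynomial decay $a_n(t)\sim 2/(m_2n^2)$ versus the Gaussian decay $e^{-\lambda m_2 n^2 t}\abs{\wh{f_1}(n,0)}^2$ forced by $f_2=f_1^{\otimes 2}$) is sound, as is the reduction from general $k$ to $k=2$ via marginal consistency. Two cosmetic points: the hypothesis only provides weak convergence for $t>0$, so the time integration should start from some $t_0>0$ rather than $0$ (this changes nothing, since only $\abs{a_n(t_0)}\leq 1$ is used); and the positivity $a_n(t)>0$ for large $n$, needed to rule out equality with the tensorised coefficient, deserves the one-line remark that the exponential correction is dominated by $2/(m_2n^2+2)$.
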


From the construction of the model and the discussion above we are not too surprised by this result -- asymptotic independence is \textit{not} what we expect when the animals try to adhere to one another. What we do expect, in a sense, is that if we allow the correlation to reach their full potential then the entire herd moves \textit{in a single direction following a random leader}. This motivates the following new definition:

\begin{definition}\label{def:order}
	Let $\mathcal{X}$ be a Polish space. We say that a sequence of symmetric probability measures, $\mu_N\in \PP\pa{\mathcal{X}^N}$ with $N\in\N$, is $\mu_0-$ordered for some probability measure $\mu_0\in \PP\pa{\mathcal{X}}$ if for any $k\in\N$
	\begin{equation}\label{eq:def_of_order}
		\Pi_k\pa{d\mu_{N}}\pa{\theta_1,\dots,\theta_k} \underset{N\to\infty}{\overset{\text{weak}}{\longrightarrow}}d\mu_0\pa{\theta_1}\prod_{i=1}^{k-1} \delta_{\theta_i}\pa{\theta_{i+1}}
	\end{equation}
	where $\delta_{a}\pa{\cdot}$ is the delta measure concentrated at the point $a$. When $\mu_0$ has a density function $f$ with respect to an underlying measure on $\X$, $\mu$ (i.e. when $d\mu_0\pa{\theta} = f\pa{\theta}d\mu\pa{\theta}$), we will say that the sequence  $\br{\mu_N}_{N\in\N}$ is $f-$ordered and simplify \eqref{eq:def_of_order} by writing
	$$\Pi_k\pa{d\mu_{N}}\pa{\theta_1,\dots,\theta_k} \underset{N\to\infty}{\overset{\text{weak}}{\longrightarrow}}f\pa{\theta_1}\prod_{i=1}^{k-1} \delta_{\theta_i}\pa{\theta_{i+1}}.$$
\end{definition}


\begin{remark}\label{rem:alternative_def}
	Since 
	$$\prod_{i=1}^{k-1} \delta_{\theta_i}\pa{\theta_{i+1}}=\prod_{i=2}^{k} \delta_{\theta_1}\pa{\theta_i}$$
	we can reformulate Definition \ref{def:order} by requiring that 
	\begin{equation}\nonumber
		\Pi_k\pa{d\mu_{N}}\pa{\theta_1,\dots,\theta_k} \underset{N\to\infty}{\overset{\text{weak}}{\longrightarrow}}d\mu_0\pa{\theta_1}\prod_{i=2}^{k} \delta_{\theta_1}\pa{\theta_i}.
	\end{equation}
	This formalisation of order highlights a bit more the concentration of the limit of $\Pi_k\pa{d\mu_{N}}$ on the diagonal. Additionally, if $\X$ also has  a group operation, which we will denote by $+$, we can rewrite \eqref{eq:def_of_order} as 
	\begin{equation}\label{eq:def_order_simplified}
		\Pi_k\pa{d\mu_{N}}\pa{\theta_1,\dots,\theta_k} \underset{N\to\infty}{\overset{\text{weak}}{\longrightarrow}}d\mu_0\pa{\theta_1}\prod_{i=1}^{k-1} \delta\pa{\theta_{i+1}-\theta_i}
	\end{equation}
	where $\delta$ is the delta measure concentrated at $0$. This is the case in our setting where $\X=\T$ with the underlying measure $\frac{d\theta}{2\pi}$ and we will use this notation from this point onwards.
\end{remark}

\begin{remark}\label{rem:about_notation}
	As we are starting to mix between singular measures and probability densities we may encounter notational issues. To simplify the presentation of this work, we will keep using a density based notation with the understanding that 
	$$\int_{\T}h(\theta)\delta\pa{\theta-\varphi}\frac{d\theta}{2\pi} = h\pa{\varphi}$$
	for all appropriate measurable functions. 
\end{remark}

Much like when considering the notion of chaoticity, an immediate question one must ask is whether or not there are any ordered states. The answer to that is in the affirmative. Given a Polish space $\mathcal{X}$ and $\mu_0\in \PP\pa{\X}$ we can define the family
$$d\mu_N \pa{\theta_1,\dots,\theta_N}=d \mu_0\pa{\theta_1}\prod_{i=1}^{N-1} \delta_{\theta_i}\pa{\theta_{i+1}}\in \PP\pa{\X^N}$$
whose marginals clearly satisfy 
$$\Pi_k\pa{d\mu_N} \pa{\theta_1,\dots,\theta_k}= d\mu_0\pa{\theta_1}\prod_{i=1}^{k-1} \delta_{\theta_i}\pa{\theta_{i+1}}.$$	
This shouldn't come as a great surprise: since our notion or order speaks of an asymptotic concentration along the diagonal, choosing a family that already has this property produces an ordered state (this is, in a sense, equivalent to choosing a tensorised family of states in the chaotic setting). \\
It is worth to note that since 
$$d\mu_0\pa{\theta_1}\prod_{i=1}^{N-1} \delta_{\theta_i}\pa{\theta_{i+1}}=\frac{1}{N}\sum_{j=1}^Nd\mu_0\pa{\theta_j}\prod_{i+1\not=j} \delta_{\theta_i}\pa{\theta_{i+1}}$$
our family $\br{\mu_N}_{N\in\N}$ is indeed symmetric. 
\medskip

Our goal in this work is to explore the newly defined notion of order and show that it is the right asymptotic correlation relation for the rescaled CL model, at least when the interaction is strong enough. Moreover, we will show that this notion \textit{propagates}.

\subsection{Main results}\label{subsec:main_results}
As we've mentioned in the previous subsection, in order to see an emergence of a non-chaotic phenomenon we need to rescale the time and the intensity of the underlying interactions in the process. While the works of Carlen et al. discuss a specific choice of scaling intensity $\epsilon_N$, we have, in fact, three different possibilities. 

To see these possibilities more clearly, let us consider the evolution equation for the first marginal. A simple integration of \eqref{eq:master_CL_rescaled} together with the fact that for symmetric density functions
\begin{equation}\nonumber 
	\rpa{F_N}_{\wt{j}}\pa{\theta_1,\dots,\wt{\theta}_{j},\dots,\theta_N}=F_{N,N-1}\pa{\theta_1,\dots,\wt{\theta}_{j},\dots,\theta_N} 
\end{equation}
shows that the evolution of the $k$-th marginals, with $k=1,\dots,N$, is given by the following BBGKY hierarchy
\begin{equation}\label{eq:hierarchy}
	\begin{split}
		\partial_t &F_{N,k}\pa{\theta_1,\dots,\theta_k} =\frac{2\lambda N}{N-1}\sum_{i<j\leq k}\Bigg\{\frac{g_{\epsilon_N}\pa{\theta_i-\theta_j}}{2}\Bigg( F_{N,k-1}\pa{\theta_1,\dots,\wt{\theta}_{j},\dots,\theta_k}\\
		&+F_{N,k-1}\pa{\theta_1,\dots,\wt{\theta}_{i},\dots,\theta_k}\Bigg)
		-F_{N,k}\pa{\theta_1,\dots,\theta_k}\Bigg\}\\
		&+\frac{2\lambda N\pa{N-k}}{N-1}\sum_{i\leq k}\frac{1}{2}\Bigg\{\int_{\T}g_{\epsilon_N}\pa{\theta_i-\theta_{k+1}} 
		F_{N,k}\pa{\theta_1,\dots,\wt{\theta}_{i},\dots,\theta_{k+1}}\frac{d\theta_{k+1}}{2\pi}\\
		&-F_{N,k}\pa{\theta_1,\dots,\theta_k}\Bigg\}
	\end{split}
\end{equation}

(for proof in the non-scaled case, see \cite{CCDW2013}). When $k=1$ the above reads as
\begin{equation}\label{eq:evolution_of_first_marginal}
	\begin{aligned}
		&\partial_t F_{N,1}\pa{\theta_1,t}=\lambda N
		\pa{\int_{-\pi}^{\pi}g_{\epsilon_N}\pa{\theta_1-\theta} 
			F_{N,1}\pa{\theta,t}\frac{d\theta}{2\pi}
			-F_{N,1}\pa{\theta_1,t}}.
	\end{aligned}
\end{equation}
As our underlying space is $\mathcal{T}=[-\pi,\pi]$ and the above is clearly a PDE which involves convolution, we are motivated to use Fourier analysis and see that on the Fourier side equation \eqref{eq:evolution_of_first_marginal} can be rewritten as 
\begin{equation}\label{eq:evolution_of_first_marginal_fourier}
	\frac{d}{dt}\wh{F_{N,1}}\pa{n,t} = \lambda N \pa{\wh{g_{\epsilon_N}}\pa{n}-1}\wh{F_{N,1}}\pa{n,t},\qquad n\in\Z.
\end{equation}
The solution to \eqref{eq:evolution_of_first_marginal_fourier} is explicitly given by
\begin{equation}\label{eq:evolution_of_first_marginal_fourier_solution}
	\wh{F_{N,1}}\pa{n,t} = e^{\lambda N \pa{\wh{g_{\epsilon_N}}\pa{n}-1}t}\wh{F_{N,1}}\pa{n,0},\qquad n\in\Z.
\end{equation}
It can be shown that as long as $g$ has a finite third moment
\begin{equation}\label{eq:approx_g_N}
	\wh{g_{\epsilon_N}}\pa{n} = 1 +\frac{m_2}{2}\epsilon_N^2n^2 + O\pa{\epsilon_N^3\abs{n}^3},
\end{equation}
where $m_2= \int_{\R}x^2 g(x)dx$ which implies that 
\begin{equation}\nonumber
	\wh{F_{N,1}}\pa{n,t} = e^{-\lambda\pa{\frac{m_2}{2}\pa{N\epsilon_N^2}n^2+ O\pa{N\epsilon_N^3 n^3}}t}\wh{F_{N,1}}\pa{n,0},\qquad n\in\Z.
\end{equation}
The above gives rise to three scaling options:
\begin{enumerate}[(i)]
	\item \underline{$N\epsilon_N^2 \underset{N\to\infty}{\longrightarrow}0$.} In this case the interaction scaling is more dominant than the time scaling. This is the case where we expect correlation to form quickly and that \textit{order} will emerge.
	\item \underline{$N\epsilon_N^2=1$.} This is the case discussed in \cite{CCDW2013,CDW2013}. The scaled interaction and time  are ``balanced'' in a diffusive manner\footnote{This intuition is reinforced by the fact that in this case we find that $\wh{f_1} = \lim_{N\to\infty}\wh{F_{N,1}}$ is given by $$\wh{f_1}(n,t)=  e^{-\frac{\lambda m_2}{2}n^2 t}\wh{f_1}(n).$$}. Interestingly, in this case order, as defined in Definition \ref{def:order}, is not observed as we will show shortly. As a small remark we'd like to mention that we could have replaced the condition $N\epsilon_N^2=1$ with $N\epsilon_N^2 \underset{N\to\infty}{\longrightarrow}C$ where $0<C<\infty$.
	\item \underline{$N\epsilon_N^2 \underset{N\to\infty}{\longrightarrow}\infty$.} In this case the time scaling is more dominant than the interaction scaling and as a result we don't expect correlation to form quickly enough. We expect that chaos will prevail here.
\end{enumerate}

Our main results in this work concern themselves only with the first two cases as our goal is to veer away from chaoticity. Before we state our main theorems we'd like to note that the existence and uniqueness of solutions to \eqref{eq:master_CL} (and equivalently \eqref{eq:master_CL_rescaled}) is immediate from the form of the evolution equation(s) and the fact that the operators which govern them are linear and bounded\footnote{The CL master equation is of the form $$\partial_t F_N\pa{\bm{V}_N,t}=\frac{2\lambda}{N-1} \sum_{i<j}\pa{\mathcal{Q}_{i,j}^\ast - \mathrm{I}}F_N\pa{\bm{V}_N,t}$$
	where $\mathcal{Q}_{i,j}$ is a Markovian operator acting through the $i-$th and $j-$th components of $F_N$ alone. More information can be found in \cite{CDW2013}.}.

\begin{theorem}\label{thm:main_order}
	Let $\br{F_N(t)}_{N\in\N}$ be the family of symmetric solutions to \eqref{eq:master_CL_rescaled}. Assume in addition that $\lim_{N\to\infty}N\epsilon_N^2=0$ and that $\br{F_{N,k}\pa{0}}_{N\in\N}$ converges weakly as $N$ goes to infinity to a family $f_k \in \PP\pa{\mathcal{T}^k}$ for any $k\in\N$. 
	Then for any $t>0$ and any $k\in\N$, $\br{F_{N,k}(t)}_{N\in\N}$ converges weakly as $N$ goes to infinity to a family $f_k(t)\in \PP\pa{\T^k}$ which satisfies
	\begin{equation}\label{eq:limit_of_F_N_K_no_order}
		\begin{aligned}
			&f_k\pa{\theta_1,\dots,\theta_k,t}= e^{-\lambda k\pa{k-1}t}f_k\pa{\theta_1,\dots,\theta_k}\\
			&+2\lambda\int_{0}^t e^{-\lambda k\pa{k-1}\pa{t-s}}\pa{ \sum_{i<j\leq k}f_{k-1}\pa{\theta_1,\dots, \wt{\theta_i},\dots,\theta_k,s}\delta\pa{\theta_i-\theta_j}}ds.
		\end{aligned}
	\end{equation}
	In particular, we have that 
	$$\lim_{t\to\infty}f_{k}(\theta_1,\dots, \theta_k,t) = f_1\pa{\theta_1}\prod_{j=1}^{k-1}\delta\pa{\theta_{i+1}-\theta_i}$$
	which is an $f_1-$ordered family. Moreover, if $\br{F_N(0)}_{N\in\N}$ is $f_1-$ordered then 
	$$f_k\pa{\theta_1,\dots,\theta_k,t}=f_1\pa{\theta_1}\prod_{j=1}^{k-1}\delta\pa{\theta_{i+1}-\theta_i}$$ 
	for all $t>0$.
\end{theorem}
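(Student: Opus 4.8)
The plan is to transfer everything to the Fourier side in the $N$ angular variables. On the compact torus $\T^k$ the characters $e^{-i\sum_{l}n_l\theta_l}$, $\pa{n_1,\dots,n_k}\in\Z^k$, are bounded and continuous, and trigonometric polynomials are dense in $C\pa{\T^k}$, so for a uniformly bounded family of probability measures weak convergence is \emph{equivalent} to pointwise convergence of all Fourier coefficients; thus it suffices to follow $\wh{F_{N,k}}\pa{n_1,\dots,n_k,t}$ and pass to the limit there. Now, multiplying $g_{\epsilon_N}\pa{\theta_i-\theta_j}$ against a function independent of $\theta_j$ \emph{merges two indices} — it sends mode $\pa{\dots,n_i,\dots,\wt n_j,\dots}$ data to mode $\pa{\dots,n_i,\dots,n_j,\dots}$ data, weighted by $\wh{g_{\epsilon_N}}\pa{n_j}$ — while integrating a fresh variable $\theta_{k+1}$ against $g_{\epsilon_N}\pa{\theta_i-\theta_{k+1}}$ simply multiplies the coefficient by $\wh{g_{\epsilon_N}}\pa{n_i}$. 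Applying this to \eqref{eq:hierarchy}, for each fixed $\pa{n_1,\dots,n_k}$ one obtains a scalar linear ODE whose right-hand side is a finite combination of $\wh{F_{N,k}}$ at the same mode and of $\wh{F_{N,k-1}}$ at the merged modes $\pa{n_1,\dots,n_i+n_j,\dots,\wt n_j,\dots,n_k}$ (and its $i\leftrightarrow j$ counterpart), with coefficients built from $\tfrac{2\lambda N}{N-1}$, $\wh{g_{\epsilon_N}}$ and $N-k$.

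\textbf{Step 2 (the limit $N\to\infty$).} This is where \eqref{eq:approx_g_N} and the hypothesis $N\epsilon_N^2\to0$ enter: for each fixed $n$ one has $\wh{g_{\epsilon_N}}\pa{n}\to1$ and, crucially, $N\pa{\wh{g_{\epsilon_N}}\pa{n}-1}=O\pa{N\epsilon_N^2\,n^2}\to0$ (the cubic remainder carries only one extra power of $N$ while $\epsilon_N\to0$, so it too vanishes). Consequently the ``extra particle'' sum in \eqref{eq:hierarchy}, which carries the large prefactor $N-k$, disappears in the limit; $\tfrac{2\lambda N}{N-1}\tfrac{\wh{g_{\epsilon_N}}\pa{n_j}}{2}\to\lambda$; and $\tfrac{2\lambda N}{N-1}\to2\lambda$ in the loss term. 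Since $\abs{\wh{F_{N,k}}}\le1$ and, by the ODE just written, $\wh{F_{N,k}}\pa{n_1,\dots,n_k,\cdot}$ is equi-Lipschitz on every $\rpa{0,T}$, a diagonal Arzel\`a--Ascoli extraction over all modes and all $k$, followed by passage to the limit in the Duhamel (integral) form of the equations, produces a limit $\wh{f_k}\pa{n_1,\dots,n_k,t}$ solving
\begin{equation}\nonumber
	\begin{split}
		\tfrac{d}{dt}\wh{f_k}\pa{n_1,\dots,n_k,t}&=\lambda\sum_{i<j\leq k}\rpa{\wh{f_{k-1}}\pa{n_1,\dots,n_i+n_j,\dots,\wt n_j,\dots,n_k,t}+\wh{f_{k-1}}\pa{n_1,\dots,\wt n_i,\dots,n_i+n_j,\dots,n_k,t}}\\
		&\quad-\lambda k\pa{k-1}\wh{f_k}\pa{n_1,\dots,n_k,t},
	\end{split}
\end{equation}
with initial data the Fourier coefficients of the assumed weak limits $f_k$. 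This system is triangular in $k$, and each level is a scalar linear ODE with constant coefficient $-\lambda k\pa{k-1}$; hence its solution is unique, so the whole sequence (not just a subsequence) converges, and pointwise Fourier convergence then gives that $F_{N,k}\pa{t}$ converges weakly to $f_k\pa{t}\in\PP\pa{\T^k}$.

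\textbf{Step 3 (identification, long time, propagation).} Solving the triangular system by variation of parameters and, separately, taking the Fourier transform of the right-hand side of \eqref{eq:limit_of_F_N_K_no_order} identifies the two: under the convention of Remark \ref{rem:about_notation}, the Fourier coefficient of $f_{k-1}\pa{\theta_1,\dots,\wt\theta_i,\dots,\theta_k}\delta\pa{\theta_i-\theta_j}$ at mode $\pa{n_1,\dots,n_k}$ is exactly $\wh{f_{k-1}}$ at the merged mode, and on the support of $\delta\pa{\theta_i-\theta_j}$ the symmetry of $f_{k-1}$ identifies the $\wt\theta_i$ and $\wt\theta_j$ gain terms, so the two expressions carry the same Fourier coefficients and hence agree as measures; this proves \eqref{eq:limit_of_F_N_K_no_order}. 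For the $t\to\infty$ statement one inducts on $k$: the $k=1$ equation has no gain term, so $\wh{f_1}\pa{n,t}\equiv\wh{f_1}\pa{n}$; the merging operation preserves $n_1+\dots+n_k$, so if $\wh{f_{k-1}}\pa{\cdot,s}\to\wh{f_1}\pa{n_1+\dots+n_k}$ as $s\to\infty$, then, using $\int_0^t e^{-a\pa{t-s}}\phi\pa{s}\,ds\to a^{-1}\lim_{s\to\infty}\phi\pa{s}$ with $a=\lambda k\pa{k-1}$ applied to the gain term (whose limit is $a\,\wh{f_1}\pa{n_1+\dots+n_k}$), one gets $\wh{f_k}\pa{n_1,\dots,n_k,t}\to\wh{f_1}\pa{n_1+\dots+n_k}$; and these are precisely the Fourier coefficients of $f_1\pa{\theta_1}\prod_{i=1}^{k-1}\delta\pa{\theta_{i+1}-\theta_i}$, i.e. the limiting family is $f_1$-ordered. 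Finally, if $\br{F_N\pa{0}}_{N\in\N}$ is $f_1$-ordered then $\wh{f_k}\pa{n_1,\dots,n_k,0}=\wh{f_1}\pa{n_1+\dots+n_k}$ for every mode, and the same induction — now noting that this identity is a \emph{stationary} solution of the level-$k$ ODE — yields $\wh{f_k}\pa{n_1,\dots,n_k,t}=\wh{f_1}\pa{n_1+\dots+n_k}$ for all $t$, which is the asserted propagation of order.

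\textbf{Main obstacle.} The delicate point is Step 2: making rigorous the passage to the limit in the full infinite coupled ODE system — uniform-in-$N$ bounds, equicontinuity, the diagonal extraction, and interchanging the limit with the time-integral in the Duhamel form — together with the soft but necessary bookkeeping that turns weak convergence on each compact $\T^k$ into pointwise convergence of Fourier coefficients and back, and the check that the limiting Fourier data genuinely arise from probability measures (which is automatic once the limit is identified as a weak limit of probability measures on a compact space).
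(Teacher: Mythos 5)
Your proposal is correct and follows essentially the same route as the paper: pass to Fourier coefficients, use $N\bigl(1-\wh{g_{\epsilon_N}}(n)\bigr)\to 0$ to kill the $(N-k)$-weighted term and freeze the gain/loss coefficients, take the limit in the Duhamel form of the closed triangular hierarchy, and then run inductions on $k$ for the $t\to\infty$ limit and for propagation of order. The only cosmetic difference is that in Step 2 you extract a limit by equicontinuity and Arzel\`a--Ascoli plus uniqueness of the limiting system, whereas the paper proceeds by induction on $k$ and applies dominated convergence directly to the recursive integral formula; both are valid and yield the same limit equations.
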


\begin{remark}\label{rem:about_main_order}
	The family of measures given by \eqref{eq:limit_of_F_N_K_no_order} is indeed a family of probability measures. To see that we notice that
	$$\int_{\T^k}f_k\pa{\theta_1,\dots,\theta_k,t}\frac{d\theta_1\dots d\theta_k}{\pa{2\pi}^k}=e^{-\lambda k\pa{k-1}t}\int_{\T^k}f_k\pa{\theta_1,\dots,\theta_k}\frac{d\theta_1\dots d\theta_k}{\pa{2\pi}^k}$$
	$$+2\lambda\int_{0}^{t} e^{-\lambda k\pa{k-1}\pa{t-s}}\pa{ \sum_{i<j\leq k}\int_{\T^{k-1}}f_{k-1}\pa{\theta_1,\dots, \wt{\theta_i},\dots,\theta_k,s}\frac{d\theta_1\dots d\wt{\theta_{i}}\dots d\theta_k}{\pa{2\pi}^{k-1}}}ds.$$
	Assuming by induction that $f_{k-1}\pa{t}$ is a probability measure shows that
	$$\int_{\T^k}f_k\pa{\theta_1,\dots,\theta_k,t}\frac{d\theta_1\dots d\theta_k}{\pa{2\pi}^k}=e^{-\lambda k\pa{k-1}t}+\lambda k\pa{k-1}\int_{0}^{t} e^{-\lambda k\pa{k-1}\pa{t-s}}ds=1,$$
	where we used the fact that $2\sum_{i<j\leq k}1=k(k-1)$.
\end{remark}
\begin{remark}\label{rem:generation_of_order}
	The first result of Theorem \ref{thm:main_order} tells us that \textit{no matter} which weakly converging family we start with, the limit family will become ordered as time goes to infinity. We can think about this as \textit{generation of order}. It is interesting to note that a phenomena of \textit{generation of chaos} was also observed by Lukkarinen. More information can be found in \cite{RS2023}.\\
	We would also like to point out that the second result in the Theorem \ref{thm:main_order} describes the \textit{propagation of order} in the CL model as it states that for any $k\in\N$ and  $t>0$
	$$\lim_{N\to\infty}F_{N,k}(t) = f_1\pa{\theta_1}\prod_{j=1}^{k-1}\delta\pa{\theta_{i+1}-\theta_i}.$$
	This capitalises on the fact that the interaction scaling is stronger then the time scaling, which is enough to imply a time independent ordered state for all $t>0$.
\end{remark}

\begin{theorem}\label{thm:main_in_between}
	Let $\br{F_N(t)}_{N\in\N}$ be the family of symmetric solutions to \eqref{eq:master_CL_rescaled}. Assume in addition that $N\epsilon_N^2=1$. Then $\br{F_N(t)}_{N\in\N} $  is neither chaotic nor ordered for any $t>0$.
\end{theorem}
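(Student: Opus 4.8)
The plan is to dispose of the two statements separately, reading everything off the $k=1$ and $k=2$ levels of the hierarchy \eqref{eq:hierarchy} on the Fourier side. \emph{Absence of chaos} is essentially the content of Theorem~\ref{thm:CCDW_no_chaos} of Carlen, Chatelin, Degond and Wennberg, since $N\epsilon_N^2=1$ is exactly the case $\epsilon_N=1/\sqrt N$: if $\br{F_N(t)}_{N\in\N}$ were chaotic at some $t>0$, then all its marginals would converge weakly, necessarily to the tensor powers $f_1(t)^{\otimes k}$, whereas Theorem~\ref{thm:CCDW_no_chaos} forces the weak limits of the marginals to satisfy $f_k(t)\neq f_1(t)^{\otimes k}$ for $k\geq 2$. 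So no further work is needed there, and all the effort goes into \emph{absence of order}.

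For absence of order I would argue by contradiction at the second marginal, using a single Fourier mode. If $\br{F_N(t)}_{N\in\N}$ were $\mu_0$-ordered for some $\mu_0$ at some $t>0$, then $\Pi_2 F_N(t)\to d\mu_0(\theta_1)\delta(\theta_2-\theta_1)$ weakly (Remark~\ref{rem:alternative_def}); testing against the bounded continuous function $e^{-i(\theta_1-\theta_2)}$ and using the convention of Remark~\ref{rem:about_notation} gives
$$a_N(t):=\wh{F_{N,2}}(1,-1,t)=\int_{\T^2}e^{-i(\theta_1-\theta_2)}F_{N,2}(\theta_1,\theta_2,t)\,\tfrac{d\theta_1\,d\theta_2}{(2\pi)^2}\ \underset{N\to\infty}{\longrightarrow}\ 1 .$$
The limiting value $1$ does not depend on $\mu_0$, so ruling it out rules out order for every $\mu_0$; and since Definition~\ref{def:order} demands \eqref{eq:def_order_simplified} for all $k$, it suffices to contradict this at $k=2$.

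To do so I would Fourier-transform the $k=2$ instance of \eqref{eq:hierarchy} and extract a closed scalar equation for $a_N$. Using the symmetry $\wh{F_{N,2}}(1,-1,t)=\wh{F_{N,2}}(-1,1,t)$, that $g$ (hence $g_{\epsilon_N}$) is even, and that the ``leader'' gain terms touch the mode $(1,-1)$ only through $\wh{F_{N,1}}(0,t)=1$, the hierarchy collapses to
$$\frac{d}{dt}a_N(t)=\frac{2\lambda N}{N-1}\,\wh{g_{\epsilon_N}}(1)+\frac{2\lambda N}{N-1}\rpa{(N-2)\pa{\wh{g_{\epsilon_N}}(1)-1}-1}\,a_N(t).$$
By \eqref{eq:approx_g_N} and $N\epsilon_N^2=1$ the quantity $N\pa{\wh{g_{\epsilon_N}}(1)-1}$ converges to $\pm\tfrac{m_2}{2}$, and since $\abs{\wh{g_{\epsilon_N}}(1)}\leq1$ (Fourier coefficients of a probability density are bounded by $1$) the sign must be negative, i.e. $N\pa{\wh{g_{\epsilon_N}}(1)-1}\to-\tfrac{m_2}{2}$. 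Hence the (time-independent) coefficients of this ODE converge to $2\lambda$ and to $-\lambda(m_2+2)$ respectively. Feeding the uniform bound $\abs{a_N(t)}\leq1$ into Duhamel's formula for the ODE then yields
$$\limsup_{N\to\infty}a_N(t)\ \leq\ \frac{2}{m_2+2}+\frac{m_2}{m_2+2}\,e^{-\lambda(m_2+2)t}\ <\ 1\qquad\text{for all }t>0,$$
using $\lambda>0$ and $m_2=\int_\R x^2g(x)\,dx>0$ (a symmetric Lebesgue density cannot be a point mass). This contradicts $a_N(t)\to1$, so $\br{F_N(t)}_{N\in\N}$ is not ordered for any $t>0$.

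The main obstacle is the bookkeeping that produces the displayed scalar ODE: one must track both families of gain terms in \eqref{eq:hierarchy} with their prefactors $\tfrac{2\lambda N}{N-1}$ and $\tfrac{2\lambda N(N-2)}{N-1}$, and notice that at the mode $(1,-1)$ the ``leader'' family contributes only the constant $\wh{g_{\epsilon_N}}(1)\to1$, while the ``follower'' family contributes $\tfrac{2\lambda N(N-2)}{N-1}\pa{\wh{g_{\epsilon_N}}(1)-1}$ to the coefficient of $a_N$ --- a quantity that stays $O(1)$ precisely because $N\epsilon_N^2=1$. This is the mechanism pinning $a_N(t)$ strictly below $1$; it would vanish if $N\epsilon_N^2\to0$ (consistently with Theorem~\ref{thm:main_order}, where the limiting coefficient becomes $-2\lambda$ and $a_N(t)\to1$ only as $t\to\infty$) and dominate if $N\epsilon_N^2\to\infty$. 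A secondary point to handle with care is that Theorem~\ref{thm:main_in_between} assumes nothing about the initial marginals, so one cannot pass to the limiting ODE and solve it; the uniform bound $\abs{a_N(t)}\leq1$ must instead be inserted into the Duhamel representation to absorb the possibly non-convergent term $a_N(0)$, which is exactly what the $\limsup$ estimate above accomplishes.
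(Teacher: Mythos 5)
Your proposal is correct and follows essentially the same route as the paper: non-chaoticity is quoted from Theorem \ref{thm:CCDW_no_chaos}, and order is excluded by showing that order would force $\wh{F_{N,2}}(1,-1,t)\to 1$ (the paper's Lemma \ref{lem:necessary_for_order}) while the balanced scaling pins this mode at $\tfrac{2}{m_2+2}+\tfrac{m_2}{m_2+2}e^{-\lambda(m_2+2)t}<1$, exactly the bound the paper obtains from its explicit formula \eqref{eq:expression_for_fourier_F_N_2}. The one genuine (and welcome) refinement is that you run Duhamel on the closed scalar ODE with the uniform bound $\abs{a_N(0)}\leq 1$ and a $\limsup$, thereby avoiding the weak convergence of the initial marginals that the paper's proof quietly assumes even though it is absent from the theorem's hypotheses (the paper would otherwise need a compactness/subsequence argument to remove it).
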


Following on Theorem \ref{thm:main_order} we might wonder if the lack of order in this setting is resolved when we allow time to go to infinity. While the next theorem answers this question in the negative, it does show that there is hope for some sort of partial order (in terms of relative concentration on the diagonal) to appear. We will discuss this a bit more in \S\ref{sec:final}.

\begin{theorem}\label{thm:not_all_hope_is_lost}
	Let $\br{F_N(t)}_{N\in\N}$ be the family of symmetric solutions to \eqref{eq:master_CL_rescaled}. Assume in addition that $N\epsilon_N^2=1$ and that $\br{F_{N,1}(0)}_{N\in\N}$ and $\br{F_{N,2}(0)}_{N\in\N}$ converge weakly to $f_1\in \PP\pa{\T}$  and $f_2\in \PP\pa{\T^2}$ respectively. Then for all $t>0$ $\br{F_{N,1}(t)}_{N\in\N}$ and $\br{F_{N,2}(t)}_{N\in\N}$ converge to $f_1(t)\in\PP\pa{\T}$ and $f_2(t)\in \PP\pa{\T^2}$ respectively which satisfy
	\begin{equation}\label{eq:f_1_convergence}
		\lim_{t\to\infty} f_1(\theta_1,t) =1,
	\end{equation}
	and
	\begin{equation}\label{eq:f_2_convergence}
		\lim_{t\to\infty}f_2(\theta_1,\theta_2,t) = \mathcal{H}\pa{\theta_1-\theta_2}
	\end{equation}
	where 
	$$\mathcal{H}\pa{\theta}= \sum_{n\in\Z}\frac{2}{m_2 n^2+2}e^{in \theta}=1+4\sum_{n\in\N}\frac{\cos\pa{n\theta}}{m_2 n^2+2}.$$
\end{theorem}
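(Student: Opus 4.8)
The plan is to carry out everything on the Fourier side of $\T$ and $\T^2$, where the hierarchy \eqref{eq:hierarchy} decouples mode by mode into scalar linear ODEs, and then to pass to the limit $N\to\infty$ first and let $t\to\infty$ afterwards.

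For the first marginal I would start from the explicit formula \eqref{eq:evolution_of_first_marginal_fourier_solution}: using $N\epsilon_N^2=1$ and the expansion \eqref{eq:approx_g_N} one has $\lambda N\pa{\wh{g_{\epsilon_N}}\pa{n}-1}=-\lambda\pa{\frac{m_2}{2}n^2+O\pa{\epsilon_N\abs{n}^3}}$, so that for each fixed $n$, and uniformly on compact time intervals, $\wh{F_{N,1}}\pa{n,t}\to e^{-\frac{\lambda m_2}{2}n^2t}\wh{F_{N,1}}\pa{n,0}\to e^{-\frac{\lambda m_2}{2}n^2t}\wh{f_1}\pa{n}$, the last step being the weak convergence of $F_{N,1}(0)$ tested against $\theta\mapsto e^{-in\theta}$. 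Since trigonometric polynomials are dense in $C\pa{\T}$ and every $F_{N,1}(t)$ has total mass one, pointwise convergence of Fourier coefficients upgrades to weak convergence, giving $F_{N,1}(t)\rightharpoonup f_1(t)$ with $\wh{f_1}\pa{n,t}=e^{-\frac{\lambda m_2}{2}n^2t}\wh{f_1}\pa{n}$. Letting $t\to\infty$ annihilates every mode except $n=0$, where $\wh{f_1}\pa{0,t}\equiv1$, which is \eqref{eq:f_1_convergence}.

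For the second marginal I would take $k=2$ in \eqref{eq:hierarchy} and Fourier transform in $\pa{\theta_1,\theta_2}$. Using the symmetry of $g$ and of $F_{N,2}$ this collapses, for each $\pa{n_1,n_2}\in\Z^2$, to
\[
\frac{d}{dt}\wh{F_{N,2}}\pa{n_1,n_2,t}=c_N\pa{n_1,n_2}\wh{F_{N,2}}\pa{n_1,n_2,t}+\frac{\lambda N}{N-1}\pa{\wh{g_{\epsilon_N}}\pa{n_1}+\wh{g_{\epsilon_N}}\pa{n_2}}\wh{F_{N,1}}\pa{n_1+n_2,t},
\]
with $c_N\pa{n_1,n_2}=-\frac{2\lambda N}{N-1}+\frac{\lambda N\pa{N-2}}{N-1}\pa{\wh{g_{\epsilon_N}}\pa{n_1}+\wh{g_{\epsilon_N}}\pa{n_2}-2}$. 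Applying \eqref{eq:approx_g_N} with $N\epsilon_N^2=1$ gives $c_N\pa{n_1,n_2}\to-2\lambda-\frac{\lambda m_2}{2}\pa{n_1^2+n_2^2}$, while the source term converges — uniformly on compact time intervals, by the previous step — to $2\lambda\wh{f_1}\pa{n_1+n_2,t}$. A one-line Gronwall/continuous-dependence estimate for this scalar linear ODE then yields $\wh{F_{N,2}}\pa{n_1,n_2,t}\to\wh{f_2}\pa{n_1,n_2,t}$ for each fixed $\pa{n_1,n_2}$ and $t$, where
\[
\frac{d}{dt}\wh{f_2}\pa{n_1,n_2,t}=-\pa{2\lambda+\frac{\lambda m_2}{2}\pa{n_1^2+n_2^2}}\wh{f_2}\pa{n_1,n_2,t}+2\lambda\wh{f_1}\pa{n_1+n_2,t},
\]
and, as before, the density argument promotes this to $F_{N,2}(t)\rightharpoonup f_2(t)$. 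To read off the long-time behaviour I would solve this by Duhamel: if $n_1+n_2\neq0$ the forcing $\wh{f_1}\pa{n_1+n_2,t}=e^{-\frac{\lambda m_2}{2}\pa{n_1+n_2}^2t}$ decays and so does the homogeneous part, hence $\wh{f_2}\pa{n_1,n_2,t}\to0$; if $n_2=-n_1$ the forcing is the constant $2\lambda$ and the equation relaxes to its equilibrium $\frac{2\lambda}{2\lambda+\lambda m_2 n_1^2}=\frac{2}{m_2 n_1^2+2}$. These are exactly the Fourier coefficients of $\mathcal H\pa{\theta_1-\theta_2}$, and since $\sum_n\pa{m_2 n^2+2}^{-1}<\infty$ the series defines a continuous function, so one last application of the density argument gives \eqref{eq:f_2_convergence}.

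The computations above are all routine linear algebra on the Fourier side, and the coefficients $c_N$ are uniformly negative (because $\wh{g_{\epsilon_N}}\pa{n}\leq1$), so no blow-up is possible; the one place where care is genuinely needed is the $N\to\infty$ passage in the second step. Concretely, one must check that after multiplication by $N$ (or $N\pa{N-2}/\pa{N-1}$) the error $O\pa{\epsilon_N^3\abs{n}^3}$ in \eqref{eq:approx_g_N} becomes $O\pa{\epsilon_N\abs{n}^3}\to0$, and that the forcing term $\wh{F_{N,1}}\pa{n_1+n_2,\cdot}$ converges to $\wh{f_1}\pa{n_1+n_2,\cdot}$ \emph{uniformly} on each $[0,T]$ rather than merely pointwise in $t$, so that the Gronwall comparison on $[0,T]$ really applies. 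Note that the theorem only asserts the two iterated limits — first $N\to\infty$, then $t\to\infty$ — so no uniformity in $t$ of the $N$-limit is needed beyond compact intervals; once that is in place, the $t\to\infty$ asymptotics are immediate from the explicit Duhamel formulas.
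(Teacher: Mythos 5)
Your proposal is correct and follows essentially the same route as the paper: both arguments work mode by mode on the Fourier side, use $N\pa{1-\wh{g_{\epsilon_N}}(n)}\to \tfrac{m_2n^2}{2}$ under $N\epsilon_N^2=1$, and take the iterated limits $N\to\infty$ then $t\to\infty$ to identify the Fourier coefficients $\tfrac{2}{m_2n^2+2}$ of $\mathcal{H}$. The only (immaterial) difference is that the paper passes to the limit in the already-solved Duhamel formula \eqref{eq:expression_for_fourier_F_N_2}, whereas you pass to the limit in the ODE itself and then solve, which costs you the small Gronwall/uniform-convergence check you correctly flag.
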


\begin{remark}\label{rem:on_explicit_formula}
	While it is possible to find $f_1(t)$ and $f_2(t) $ (as we will see in the proof of the theorem), the focus of Theorem \ref{thm:not_all_hope_is_lost} is on the asymptotic behaviour with respect to time and consequently we elected to exclude formulae from the statement. 
\end{remark}

\begin{remark}\label{rem:on_h}
	As can be seen in the figure below
	\begin{figure}[H]
		\centering
		\includegraphics[scale=0.9]{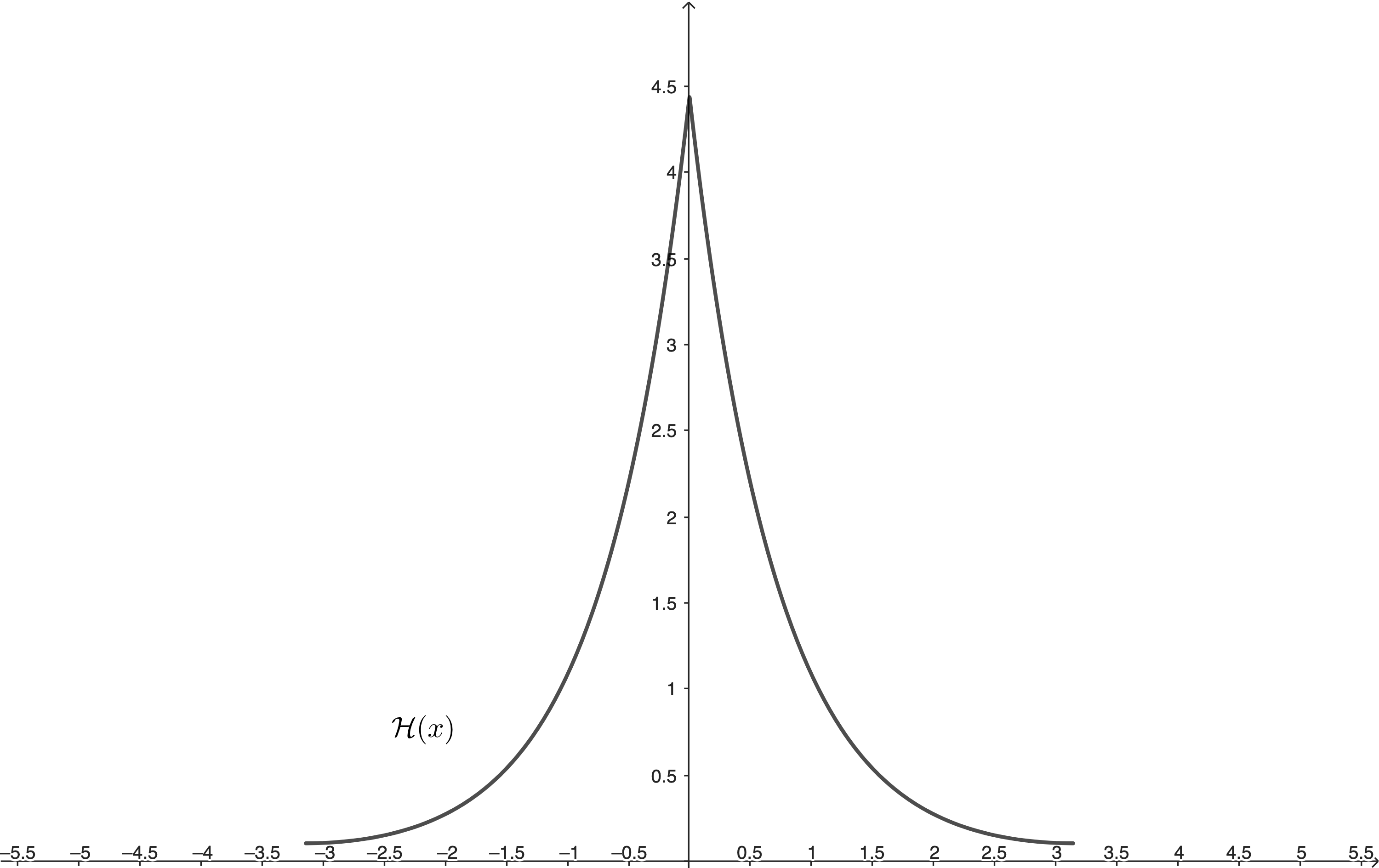}
		\caption{A plot of an approximation of $\mathcal{H}$ with $m_2=1$ by the first 500 terms of the cosine series.}
	\end{figure}
	$\mathcal{H}$ is somewhat concentrated around $0$, validating our intuition that some ``type of order'' (or partial order) phenomenon may emerge here.
\end{remark}

\subsection{The organisation of the paper} In section \S\ref{sec:preliminaries} we will discuss some preliminaries that will help us prove our main results. Section \S\ref{sec:strong_interaction} will be dedicated to the proof of Theorem \ref{thm:main_order} while section \S\ref{sec:balanced_interaction} will focus on Theorems \ref{thm:main_in_between} and \ref{thm:not_all_hope_is_lost}. We'll conclude the work with some final remarks in section \S\ref{sec:final} and a couple of appendices that consider some technical details. 

\section{Preliminaries}\label{sec:preliminaries}

Looking at the BBGKY hierarchy of our (rescaled) CL model, given by \eqref{eq:hierarchy}, we immediately notice that besides the fact that we are dealing with a \textit{closed} linear hierarchy - it also involves a simple convolution term. This motivates us to use Fourier analysis in our investigation of the model, the application of which will be the focus of this short section.

In this section we will consider the following topics: the connection between weak convergence and Fourier coefficients on $\T^k$ and the meaning of order in the Fourier space, the behaviour of the Fourier coefficients of $g_{\epsilon_N}$, and the recasting of our rescaled master equation \eqref{eq:master_CL_rescaled} in the Fourier space. 

To simplify notations we will denote by $g_N=g_{\epsilon_N}$ from this point onwards. 

We start with the following simple observation whose proof is left to Appendix \ref{app:extra} for the sake of completion.
\begin{lemma}\label{lem:weak_convergence_and_fourier}
	Let $\br{\mu^{(k)}_{N}}_{N\in\N}$ be a sequence of probability measures on $\T^k,$ and let $\mu^{(k)}\in \PP\pa{\T^k}$. Then $\mu^{(k)}_N\underset{N\to\infty}{\overset{\text{weak}}{\longrightarrow}}\mu^{(k)}$ if and only if for any $\pa{n_1,\dots,n_k}\in\Z^k$
	\begin{equation}\nonumber 
		\begin{aligned}
			\wh{\mu_{N}^{(k)}}\pa{n_1,\dots,n_k} &= \int_{\T^k} e^{-i\sum_{j=1}^k n_j \theta_j} d\mu_N^{(k)}\pa{\theta_1,\dots,\theta_k} 
			&\underset{N\to\infty}{\longrightarrow}\wh{\mu^{(k)}}\pa{n_1,\dots,n_k}. 
		\end{aligned}
	\end{equation}
\end{lemma}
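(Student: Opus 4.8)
The statement to prove is Lemma \ref{lem:weak_convergence_and_fourier}, which is a standard fact: weak convergence of probability measures on the torus $\T^k$ is equivalent to convergence of all Fourier coefficients. Let me sketch a proof.

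\textbf{Proof proposal.}

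The plan is to establish the equivalence using the density of trigonometric polynomials (more precisely, of finite linear combinations of the characters $e^{i\sum_j n_j\theta_j}$) in $C(\T^k)$, together with the standard portmanteau/approximation machinery for weak convergence on a compact space.

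\emph{Forward direction.} Suppose $\mu_N^{(k)} \xrightarrow{\text{weak}} \mu^{(k)}$. The function $\theta \mapsto e^{-i\sum_{j=1}^k n_j\theta_j}$ is continuous and bounded on $\T^k$ (its real and imaginary parts are bounded continuous functions), so by the very definition of weak convergence (integration against bounded continuous functions, as recalled in Definition \ref{def:chaos}) we get $\wh{\mu_N^{(k)}}(n_1,\dots,n_k) \to \wh{\mu^{(k)}}(n_1,\dots,n_k)$ for every $(n_1,\dots,n_k)\in\Z^k$. This direction is immediate.

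\emph{Reverse direction.} Suppose $\wh{\mu_N^{(k)}}(\bm n) \to \wh{\mu^{(k)}}(\bm n)$ for all $\bm n\in\Z^k$. By linearity, for every trigonometric polynomial $P(\theta) = \sum_{\bm n \in F} c_{\bm n} e^{i\sum_j n_j \theta_j}$ (finite sum over $F\subset\Z^k$) we have $\int_{\T^k} P\, d\mu_N^{(k)} \to \int_{\T^k} P\, d\mu^{(k)}$. Now let $\varphi \in C(\T^k)$ be arbitrary and $\delta>0$. Since $\T^k$ is a compact abelian group, the Stone--Weierstrass theorem (or Fejér's theorem on the torus) gives a trigonometric polynomial $P$ with $\norm{\varphi - P}_\infty < \delta$. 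Then
\begin{equation}\nonumber
\abs{\int \varphi\, d\mu_N^{(k)} - \int \varphi\, d\mu^{(k)}} \le \abs{\int (\varphi - P)\, d\mu_N^{(k)}} + \abs{\int P\, d\mu_N^{(k)} - \int P\, d\mu^{(k)}} + \abs{\int (P-\varphi)\, d\mu^{(k)}}
\end{equation}
and the outer two terms are each $< \delta$ since the $\mu_N^{(k)}$ and $\mu^{(k)}$ are probability measures, while the middle term $\to 0$. Hence $\limsup_N \abs{\int \varphi\, d\mu_N^{(k)} - \int \varphi\, d\mu^{(k)}} \le 2\delta$; letting $\delta\to 0$ gives convergence, i.e. $\mu_N^{(k)} \xrightarrow{\text{weak}} \mu^{(k)}$.

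\emph{Main obstacle.} There is no serious obstacle here; the only point requiring a little care is the uniform approximation of an arbitrary continuous function on $\T^k$ by trigonometric polynomials (invoking Stone--Weierstrass or the multidimensional Fejér kernel) and the use of the fact that all measures involved are probability measures so that the error terms are controlled uniformly in $N$ by $\norm{\varphi - P}_\infty$. One might also remark that since $\T^k$ is compact, no tightness argument is needed and the limit $\mu^{(k)}$ is assumed given, so the statement is purely an equivalence of two convergence notions rather than an existence result.
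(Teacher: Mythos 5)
Your proof is correct and follows exactly the same route as the paper's: the forward direction is immediate since each character is a bounded continuous function, and the converse uses density of trigonometric polynomials in $C\pa{\T^k}$ together with the uniform control afforded by the measures being probability measures. You simply spell out the $3\delta$-approximation argument that the paper leaves implicit.
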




We would like to remind the reader that when we consider a probability density $f_k\in \PP\pa{\T^k} $ it is always with respect to the underlying measure $\frac{d\theta_1\dots d\theta_k}{\pa{2\pi}^k}$
which means that 
$$\wh{f_k}\pa{n_1,\dots,n_k}=\int_{\T^k} f_k\pa{\theta_1,\dots,\theta_k}e^{-i\sum_{j=1}^k n_j \theta_j} \frac{d\theta_1\dots d\theta_k}{\pa{2\pi}^k},$$
as expected.
\subsection*{The meaning of order in the Fourier space} 
Following on Lemma \ref{lem:weak_convergence_and_fourier} we want to find out how an ordered state looks like in the Fourier space:

\begin{lemma}\label{lem:order_in_fourier}
	The family $F_N\in \PP\pa{\T^N}$, with $N\in\N$, is $f-$ordered if and only if 
	for any $\pa{n_1,\dots,n_k}\in\Z^k$
	\begin{equation}\nonumber 
		\begin{aligned}
			\wh{F_{N,k}}\pa{n_1,\dots,n_k}&\underset{N\to\infty}{\longrightarrow}\wh{f}\pa{\sum_{j=1}^k n_j}. 
		\end{aligned}
	\end{equation}
\end{lemma}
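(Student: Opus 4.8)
The plan is to use Lemma \ref{lem:weak_convergence_and_fourier} to translate the weak convergence defining $f$-orderedness into convergence of Fourier coefficients, and then simply to compute the Fourier coefficients of the limiting measure $f\pa{\theta_1}\prod_{i=1}^{k-1}\delta\pa{\theta_{i+1}-\theta_i}$ appearing in \eqref{eq:def_order_simplified}. First I would fix $k\in\N$ and $\pa{n_1,\dots,n_k}\in\Z^k$, and observe that by Lemma \ref{lem:weak_convergence_and_fourier} the sequence $\br{F_{N,k}}_{N\in\N}$ converges weakly to a given $\mu^{(k)}\in\PP\pa{\T^k}$ if and only if $\wh{F_{N,k}}\pa{n_1,\dots,n_k}\to\wh{\mu^{(k)}}\pa{n_1,\dots,n_k}$ for every such multi-index. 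So everything reduces to identifying $\wh{\mu^{(k)}}$ when $\mu^{(k)}$ is the diagonal-concentrated measure.

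The core computation is the following: with the density-based convention of Remark \ref{rem:about_notation}, and writing the limit in the form \eqref{eq:def_order_simplified},
\begin{equation}\nonumber
	\begin{aligned}
		\int_{\T^k}e^{-i\sum_{j=1}^k n_j\theta_j}\,f\pa{\theta_1}\prod_{i=1}^{k-1}\delta\pa{\theta_{i+1}-\theta_i}\,\frac{d\theta_1\dots d\theta_k}{\pa{2\pi}^k}
		&=\int_{\T}e^{-i\pa{\sum_{j=1}^k n_j}\theta_1}f\pa{\theta_1}\frac{d\theta_1}{2\pi}\\
		&=\wh{f}\pa{\sum_{j=1}^k n_j},
	\end{aligned}
\end{equation}
where the first equality comes from integrating out $\theta_k,\theta_{k-1},\dots,\theta_2$ successively against the delta factors (each $\delta\pa{\theta_{i+1}-\theta_i}$ forces $\theta_{i+1}=\theta_i$, so the exponent collapses to $-i\pa{n_1+\dots+n_k}\theta_1$), and the second is just the definition of $\wh{f}$. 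Combining this identity with Lemma \ref{lem:weak_convergence_and_fourier}: $\br{F_N}_{N\in\N}$ is $f$-ordered, i.e.\ $\Pi_k\pa{F_N}$ converges weakly to $f\pa{\theta_1}\prod_{i=1}^{k-1}\delta\pa{\theta_{i+1}-\theta_i}$ for every $k$, if and only if $\wh{F_{N,k}}\pa{n_1,\dots,n_k}\to\wh{f}\pa{\sum_{j=1}^k n_j}$ for every $k$ and every $\pa{n_1,\dots,n_k}\in\Z^k$, which is exactly the claim.

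There is essentially no obstacle here; the statement is a bookkeeping exercise. The only point deserving a sentence of care is the handling of the singular measure inside the Fourier integral — one should either invoke the convention of Remark \ref{rem:about_notation} and the reformulation \eqref{eq:def_order_simplified}, or argue directly that the exponential $e^{-i\sum_j n_j\theta_j}$ is a bounded continuous test function on $\T^k$ (which it is, with the endpoint identification), so that weak convergence in the sense of Definition \ref{def:chaos}/\ref{def:order} literally gives convergence of the integral against $e^{-i\sum_j n_j\theta_j}$, and then evaluate that integral against the diagonal measure via iterated use of $\int_{\T}h\pa{\theta}\delta\pa{\theta-\varphi}\frac{d\theta}{2\pi}=h\pa{\varphi}$. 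Either way the equivalence follows immediately from Lemma \ref{lem:weak_convergence_and_fourier} and the one-line evaluation above.
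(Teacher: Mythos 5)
Your argument is correct and is essentially the paper's own proof: the paper combines Lemma \ref{lem:weak_convergence_and_fourier} with Lemma \ref{lem:purely_ordered_state}, whose proof is exactly your one-line computation of the Fourier coefficients of the diagonal measure $f\pa{\theta_1}\prod_{i=1}^{k-1}\delta\pa{\theta_{i+1}-\theta_i}$. Nothing is missing.
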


The proof of the above relies on the following simple observation:
\begin{lemma}\label{lem:purely_ordered_state}
	Let $f\in \PP\pa{\T}$ and let $k\in\N$. Then $\mu\in\PP\pa{\T^k}$ satisfies 
	$$d\mu\pa{\theta_1,\dots,\theta_k}=f\pa{\theta_1}\prod_{i=1}^{k-1}\delta\pa{\theta_{i+1}-\theta_i}$$
	if and only if $\wh{\mu}\pa{n_1,\dots,n_k}=\wh{f}\pa{\sum_{j=1}^k n_j}$ for any $\pa{n_1,\dots,n_k}\in \Z^k$. 
\end{lemma}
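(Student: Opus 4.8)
The plan is to verify the ``only if'' direction by a direct computation of the Fourier coefficients of the diagonal measure, and to deduce the ``if'' direction from the uniqueness of the Fourier coefficients of a finite measure on the torus.

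First I would assume $d\mu\pa{\theta_1,\dots,\theta_k}=f\pa{\theta_1}\prod_{i=1}^{k-1}\delta\pa{\theta_{i+1}-\theta_i}$ and compute $\wh{\mu}$ directly. Using the convention of Remark \ref{rem:about_notation}, integrating $e^{-i\sum_{j=1}^k n_j\theta_j}$ against $d\mu$ and resolving the delta factors one at a time (each $\delta\pa{\theta_{i+1}-\theta_i}$ forces $\theta_{i+1}=\theta_i$, so inductively $\theta_j=\theta_1$ for every $j$) leaves
\begin{equation}\nonumber
	\int_{\T^k} e^{-i\sum_{j=1}^k n_j\theta_j}\, d\mu\pa{\theta_1,\dots,\theta_k} = \int_{\T} f\pa{\theta_1} e^{-i\pa{\sum_{j=1}^k n_j}\theta_1}\,\frac{d\theta_1}{2\pi} = \wh{f}\pa{\sum_{j=1}^k n_j}.
\end{equation}
I would present the case $k=2$ explicitly and note that the general case follows by induction on $k$, peeling off the last delta $\delta\pa{\theta_k-\theta_{k-1}}$ and applying the inductive hypothesis to the remaining $(k-1)$-fold product (after absorbing $e^{-in_k\theta_{k-1}}$ into the $(k-1)$-st frequency). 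One should also note in passing that this measure is genuinely an element of $\PP\pa{\T^k}$ (it is the pushforward of $f\,\frac{d\theta}{2\pi}$ under the diagonal embedding $\theta\mapsto\pa{\theta,\dots,\theta}$), so the statement is not vacuous.

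For the converse I would invoke the standard fact that a finite (complex) Borel measure on $\T^k$ is uniquely determined by its Fourier coefficients $\wh{\mu}\pa{n_1,\dots,n_k}=\int_{\T^k}e^{-i\sum_{j=1}^k n_j\theta_j}\,d\mu$; this is immediate from the density of trigonometric polynomials in $C\pa{\T^k}$ (Stone--Weierstrass) together with the Riesz representation theorem. Since by the first part the diagonal measure $f\pa{\theta_1}\prod_{i=1}^{k-1}\delta\pa{\theta_{i+1}-\theta_i}$ has Fourier coefficients exactly $\wh{f}\pa{\sum_{j=1}^k n_j}$, any $\mu\in\PP\pa{\T^k}$ with the same Fourier coefficients must coincide with it.

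There is essentially no serious obstacle here; the only points demanding care are the bookkeeping with the delta-measure convention (licensed by Remark \ref{rem:about_notation}) and stating the uniqueness theorem in the product form needed on $\T^k$ rather than on $\T$. The lemma is then immediate and feeds directly into the proof of Lemma \ref{lem:order_in_fourier}, whose ``if and only if'' reduces, via Lemma \ref{lem:weak_convergence_and_fourier}, to the pointwise-in-$\pa{n_1,\dots,n_k}$ convergence $\wh{F_{N,k}}\pa{n_1,\dots,n_k}\to\wh{f}\pa{\sum_{j=1}^k n_j}$ together with the identification of the limit measure just obtained.
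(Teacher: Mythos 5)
Your proposal is correct and follows essentially the same route as the paper: a direct computation of the Fourier coefficients of the diagonal measure $f\pa{\theta_1}\prod_{i=1}^{k-1}\delta\pa{\theta_{i+1}-\theta_i}$, combined with the uniqueness of a measure given its Fourier coefficients to handle the converse. The extra remarks you add (the inductive peeling of the deltas, the pushforward description) are harmless elaborations of the same argument.
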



\begin{proof} 
	Since the Fourier coefficients of a measure determine it uniquely,
	it is enough for us to show that the Fourier coefficient of $\mu_{\mathrm{o}}=f\pa{\theta_1}\prod_{i=1}^{k-1}\delta\pa{\theta_{i+1}-\theta_i}$ at $\pa{n_1,\dots,n_k}$ is $\wh{f}\pa{\sum_{j=1}^{k}n_j}$. Indeed
	$$\wh{\mu_\mathrm{o}}\pa{n_1,\dots,n_k} = \int_{\T^k}f\pa{\theta_1}\prod_{i=1}^{k-1}\delta\pa{\theta_{i+1}-\theta_i}e^{-i\sum_{j=1}^k n_j\theta_j}\frac{d\theta_1\dots d\theta_k}{\pa{2\pi}^k} $$
	$$=\int_{\T}f\pa{\theta_1}e^{-i\pa{\sum_{j=1}^k n_j}\theta_1}\frac{d\theta_1}{2\pi} = \wh{f}\pa{\sum_{j=1}^k n_j}.$$
\end{proof}

\begin{proof}[Proof of Lemma \ref{lem:order_in_fourier}]
	The proof is an immediate application of Lemmas \ref{lem:weak_convergence_and_fourier} and \ref{lem:purely_ordered_state}. 
\end{proof}

\subsection*{The behaviour of the Fourier coefficients of $g_N$} 
The penultimate ingredient we need in our investigation of \eqref{eq:master_CL_rescaled} and to show the appearance of order is the following lemma:

\begin{lemma}\label{lem:approx_g}
	Let $g\in\PP\pa{\R,dx}$  be such that its $k$-th moment, defined as
	$$m_k = \int_{\R}\abs{x}^k g(x)dx,$$
	is finite for some $k>2$. Then for any $\epsilon<\frac{\pi}{\sqrt[k]{m_k}}$ and any $n\in\Z$
	\begin{equation}\label{eq:estimation_for_fourier_of_g_eps_small_xi}
		\abs{\widehat{g_\epsilon}(n) - 1 + \frac{m_2}{2}\pa{n\epsilon}^2} \leq \frac{2\epsilon^k m_k}{\pi^k-\epsilon^km_k} + \frac{m_3}{3}\pa{\abs{n}\epsilon}^3.
	\end{equation}
\end{lemma}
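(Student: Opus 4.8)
The plan is to move everything onto the real line through the change of variables that defines $g_\epsilon$, and then to compare $\wh{g_\epsilon}(n)$ with the characteristic function of $g$ evaluated at $n\epsilon$. First I would substitute $\theta=\epsilon x$ in $\wh{g_\epsilon}(n)=\int_{\T}g_\epsilon(\theta)e^{-in\theta}\frac{d\theta}{2\pi}$; writing $G_\epsilon:=\int_{-\pi/\epsilon}^{\pi/\epsilon}g(x)\,dx=2\pi\wt{g}_\epsilon$ and using that $g$ is even (a standing assumption), so that the odd, sine part of $e^{-in\epsilon x}$ integrates to zero, this yields the exact identity
\begin{equation}\nonumber
\wh{g_\epsilon}(n)=\frac{1}{G_\epsilon}\int_{-\pi/\epsilon}^{\pi/\epsilon}g(x)\cos(n\epsilon x)\,dx .
\end{equation}
Setting $\eta_\epsilon:=\int_{\abs{x}>\pi/\epsilon}g(x)\,dx=1-G_\epsilon$, Markov's inequality applied to $\abs{x}^k$ gives $\eta_\epsilon\le \epsilon^k m_k/\pi^k$, and the hypothesis $\epsilon<\pi/\sqrt[k]{m_k}$ is precisely what forces this bound (hence $\eta_\epsilon$) below $1$, so that $G_\epsilon>0$ and the identity above makes sense.

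Next I would split the target quantity by the triangle inequality, introducing $\phi(t):=\int_{\R}g(x)\cos(tx)\,dx$ — the characteristic function of $g$, which is real since $g$ is even:
\begin{equation}\nonumber
\abs{\wh{g_\epsilon}(n)-1+\frac{m_2}{2}(n\epsilon)^2}\le \abs{\wh{g_\epsilon}(n)-\phi(n\epsilon)}+\abs{\phi(n\epsilon)-1+\frac{m_2}{2}(n\epsilon)^2}.
\end{equation}
For the first summand I would write $\phi(n\epsilon)=\int_{-\pi/\epsilon}^{\pi/\epsilon}g(x)\cos(n\epsilon x)\,dx+\int_{\abs{x}>\pi/\epsilon}g(x)\cos(n\epsilon x)\,dx$; then, bounding the truncated integral by $1$ and the tail integral by $\eta_\epsilon$, and using $\abs{G_\epsilon^{-1}-1}=\eta_\epsilon/G_\epsilon\le \eta_\epsilon/(1-\eta_\epsilon)$, one gets — uniformly in $n$, because $\abs{\cos}\le 1$ — the bound $\frac{\eta_\epsilon}{1-\eta_\epsilon}+\eta_\epsilon\le \frac{2\eta_\epsilon}{1-\eta_\epsilon}$. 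Since $s\mapsto \frac{2s}{1-s}$ is increasing on $[0,1)$, inserting $\eta_\epsilon\le \epsilon^k m_k/\pi^k$ gives exactly $\frac{2\epsilon^k m_k}{\pi^k-\epsilon^k m_k}$, the first term in \eqref{eq:estimation_for_fourier_of_g_eps_small_xi}.

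For the second summand I would use the elementary Taylor estimate $\abs{e^{is}-1-is+\frac{s^2}{2}}\le \frac{\abs{s}^3}{6}$, integrate it against $g$, and invoke the symmetry of $g$ to kill the first-moment term; this gives $\abs{\phi(t)-1+\frac{m_2}{2}t^2}\le \frac{m_3}{6}\abs{t}^3\le \frac{m_3}{3}\abs{t}^3$, where $m_2<\infty$ follows from $m_k<\infty$ and $k>2$ by Jensen's (or Hölder's) inequality, and $m_3$ is finite under the running assumptions (when $k<3$ and $m_3=\infty$ the asserted inequality is vacuous). Taking $t=n\epsilon$ and adding the two estimates produces \eqref{eq:estimation_for_fourier_of_g_eps_small_xi}. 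I do not anticipate a genuine obstacle here: everything is elementary, and the only points that need a little care are that the truncation error must be made uniform in $n$ — which works because $\cos$ is bounded — and the bookkeeping of constants so that the denominator $\pi^k-\epsilon^k m_k$, rather than merely $\pi^k$, appears.
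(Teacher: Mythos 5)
Your proposal is correct and follows essentially the same route as the paper: the change of variables identifying $\wh{g_\epsilon}(n)$ with the normalised truncated characteristic function at $n\epsilon$, a Markov-type tail bound yielding the $\frac{2\epsilon^k m_k}{\pi^k-\epsilon^k m_k}$ term, and a third-order Taylor estimate of the characteristic function for the remaining term. The only (harmless) deviation is that you bound the complex exponential remainder directly by $\abs{s}^3/6$ instead of splitting into cosine and sine parts as the paper does, which in fact gives a slightly sharper constant than the stated $\frac{m_3}{3}$.
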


The proof of the above is fairly straightforward and can be found in Appendix \ref{app:fourier_behaviour} for the sake of completion.  

\begin{remark}
	We would like to point out that the approximation \eqref{eq:approx_g_N} follows immediately from the above\footnote{Note that for any $n\in\Z\setminus\br{0}$ we have that $\epsilon_N^3 \leq \epsilon_N^3 \abs{n}^3$ which shows that \eqref{eq:estimation_for_fourier_of_g_eps_small_xi} indeed implies \eqref{eq:approx_g_N} when $n\not=0$. If $n=0$ we have the identity $\g(0)=1$.}.
\end{remark}

\subsection*{Recasting of the (rescaled) master equation for the CL model in the Fourier space} The last result of this section concerns itself with recasting \eqref{eq:master_CL_rescaled} with the Fourier coefficients of our given family of solutions. We would like to mention that as the underlying space is compact and the generator of our master equation is a bounded linear operator, there is no issue with interchanging the time derivative and spatial integration which we will perform in order to move to the Fourier space. 

\begin{lemma}\label{lem:fourier_of_F_N_k}
	Let $\br{F_{N,k}(t)}_{N\in\N}$ be the family of $k$-th marginals to the family of symmetric  solutions to \eqref{eq:master_CL_rescaled}, $\br{F_N(t)}_{N\in\N}$. Then we have that 
	\begin{equation}\label{eq:fourier_of_F_N_k}
		\begin{aligned}
			&\partial_t \wh{F_{N,k}}\pa{n_1,\dots,n_k} \\
			&= \frac{2\lambda N}{\pa{N-1}}\sum_{i<j\leq k}\Bigg\{ \frac{\widehat{g_N}\pa{n_i}+\widehat{g_N}\pa{n_j}}{2}\wh{F_{N,k-1}}\pa{n_1,\dots, n_i+n_j,\dots,n_k}\\
			&- \wh{F_{N,k}}\pa{n_1,\dots,n_k}\Big\}+\frac{\lambda N\pa{N-k}}{\pa{N-1}}\wh{F_{N,k}}\pa{n_1,\dots,n_k}\sum_{i\leq k}\pa{\wh{g_N}\pa{n_i}-1},
		\end{aligned}
	\end{equation}
	where $\pa{n_1,\dots, n_i+n_j,\dots,n_k}$
	is attained by replacing $n_i$  with $n_i+n_j$ and omitting $n_j$ from the original vector $\pa{n_1,\dots,n_k}$ or, due to the symmetry of $\wh{F_{N,k-1}}$, replacing $n_j$  with $n_i+n_j$ and omitting $n_i$.
	Identity \eqref{eq:fourier_of_F_N_k} can also be rewritten as
	\begin{equation}\label{eq:fourier_of_F_N_k_nice}
		\begin{aligned}
			&\partial_t \wh{F_{N,k}}\pa{n_1,\dots,n_k}\\
			& = \frac{\lambda N}{N-1}\pa{\pa{N-k}\sum_{i\leq k}\pa{\wh{g_N}\pa{n_i}-1}-k\pa{k-1}}\wh{F_{N,k}}\pa{n_1,\dots,n_k}\\
			& +\frac{\lambda N}{N-1}\sum_{i<j\leq k} \pa{\widehat{g_N}\pa{n_i}+\widehat{g_N}\pa{n_j}}\wh{F_{N,k-1}}\pa{n_1,\dots, n_i+n_j,\dots,n_k}.
		\end{aligned}
	\end{equation}
\end{lemma}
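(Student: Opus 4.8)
The plan is to derive \eqref{eq:fourier_of_F_N_k} directly from the BBGKY hierarchy \eqref{eq:hierarchy} by applying the Fourier transform on $\T^k$ term by term, and then to rearrange the result algebraically into the more compact form \eqref{eq:fourier_of_F_N_k_nice}. Since the underlying torus is compact and the generator of \eqref{eq:master_CL_rescaled} is a bounded linear operator, there is no analytic subtlety in interchanging $\partial_t$ with the integration $\int_{\T^k}(\cdot)e^{-i\sum_j n_j\theta_j}\frac{d\theta_1\cdots d\theta_k}{(2\pi)^k}$, so the whole argument is a bookkeeping computation.

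First I would treat the ``collisional'' sum $\sum_{i<j\leq k}$ in \eqref{eq:hierarchy}. For the term $g_{\epsilon_N}(\theta_i-\theta_j)F_{N,k-1}(\theta_1,\dots,\wt{\theta}_j,\dots,\theta_k)$ I would compute its Fourier coefficient at $(n_1,\dots,n_k)$: the variable $\theta_j$ appears only inside the factor $g_{\epsilon_N}(\theta_i-\theta_j)$, so integrating $e^{-in_j\theta_j}g_{\epsilon_N}(\theta_i-\theta_j)\frac{d\theta_j}{2\pi}$ produces $\wh{g_N}(n_j)e^{-in_j\theta_i}$ (using that $g_N$ is a density on $\T$ and the standard shift rule), which then merges the $\theta_i$-frequency from $n_i$ to $n_i+n_j$; integrating the remaining variables against the displaced exponential yields $\wh{g_N}(n_j)\,\wh{F_{N,k-1}}(n_1,\dots,n_i+n_j,\dots,n_k)$. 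The companion term with $\wt{\theta}_i$ gives the same expression with $\wh{g_N}(n_i)$, which is how the symmetric combination $\tfrac{\wh{g_N}(n_i)+\wh{g_N}(n_j)}{2}$ arises; the term $-F_{N,k}$ obviously transforms to $-\wh{F_{N,k}}(n_1,\dots,n_k)$. Next I would treat the ``mean field'' sum $\sum_{i\leq k}$: here the loss term again transforms trivially, while the gain term $\int_{\T}g_{\epsilon_N}(\theta_i-\theta_{k+1})F_{N,k}(\theta_1,\dots,\wt{\theta}_i,\dots,\theta_{k+1})\frac{d\theta_{k+1}}{2\pi}$ is precisely a convolution in the $i$-th slot of $F_{N,k}$ after the dummy relabelling $\theta_{k+1}\to\theta_i$; its Fourier coefficient is $\wh{g_N}(n_i)\,\wh{F_{N,k}}(n_1,\dots,n_k)$. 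Collecting the $\tfrac{2\lambda N}{N-1}$ and $\tfrac{2\lambda N(N-k)}{N-1}\cdot\tfrac12$ prefactors verbatim from \eqref{eq:hierarchy} yields \eqref{eq:fourier_of_F_N_k}, with the parenthetical remark about the two equivalent descriptions of $(n_1,\dots,n_i+n_j,\dots,n_k)$ following from the symmetry of $\wh{F_{N,k-1}}$ in its arguments.

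Finally, to pass from \eqref{eq:fourier_of_F_N_k} to \eqref{eq:fourier_of_F_N_k_nice} I would just collect the coefficient of $\wh{F_{N,k}}(n_1,\dots,n_k)$. From the collisional sum this coefficient is $-\tfrac{2\lambda N}{N-1}\cdot\#\{i<j\leq k\} = -\tfrac{\lambda N}{N-1}k(k-1)$ using $2\binom k2 = k(k-1)$; from the mean field part it is $\tfrac{\lambda N(N-k)}{N-1}\sum_{i\leq k}(\wh{g_N}(n_i)-1)$, already in the stated shape. Combining these two gives the first line of \eqref{eq:fourier_of_F_N_k_nice}, and the remaining collisional gain term, with its $\tfrac{2\lambda N}{N-1}\cdot\tfrac12 = \tfrac{\lambda N}{N-1}$ prefactor absorbing the $\tfrac12$, becomes $\tfrac{\lambda N}{N-1}\sum_{i<j\leq k}(\wh{g_N}(n_i)+\wh{g_N}(n_j))\wh{F_{N,k-1}}(n_1,\dots,n_i+n_j,\dots,n_k)$, which is the second line.

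There is no real obstacle here; the only points requiring care are the frequency-shift bookkeeping in the collisional gain term — making sure the $\theta_j$-integration both picks up $\wh{g_N}(n_j)$ and shifts the $i$-th frequency to $n_i+n_j$ — and tracking the combinatorial constants so that the $k(k-1)$ and the $(N-k)$ factors land correctly; a quick consistency check is to set $k=1$, where \eqref{eq:fourier_of_F_N_k_nice} must collapse to \eqref{eq:evolution_of_first_marginal_fourier}, i.e. $\tfrac{\lambda N}{N-1}(N-1)(\wh{g_N}(n_1)-1)\wh{F_{N,1}}(n_1) = \lambda N(\wh{g_N}(n_1)-1)\wh{F_{N,1}}(n_1)$, which it does.
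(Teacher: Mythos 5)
Your proposal is correct and follows essentially the same route as the paper: Fourier-transform the BBGKY hierarchy \eqref{eq:hierarchy} term by term, with the $\theta_j$-integration in the collisional gain term producing $\wh{g_N}(n_j)$ and shifting the $i$-th frequency to $n_i+n_j$, the mean-field gain term reducing to $\wh{g_N}(n_i)\wh{F_{N,k}}(n_1,\dots,n_k)$ via the symmetry of $F_{N,k}$, and the passage to \eqref{eq:fourier_of_F_N_k_nice} via $2\sum_{i<j\leq k}1=k(k-1)$. The only point worth making explicit is that your ``standard shift rule'' silently uses the symmetry of $g$ to turn $\wh{g_N}(-n_j)$ into $\wh{g_N}(n_j)$, which the paper states up front; otherwise the arguments coincide.
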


\begin{proof}
	We start by noticing that due to the symmetry of $g$ we find that for any $i<j \leq k$
	$$\int_{\T^k}g_{N}\pa{\theta_i-\theta_j}F_{N,k-1}\pa{\theta_1,\dots,\wt{\theta}_{j},\dots,\theta_k}e^{-i\sum_{l=1}^k n_l \theta_l}\frac{d\theta_1\dots d\theta_k}{\pa{2\pi}^k}$$
	$$=\int_{\T^k}g_{N}\pa{\theta_j-\theta_i}F_{N,k-1}\pa{\theta_1,\dots,\wt{\theta}_{j},\dots,\theta_k}e^{-i\sum_{l=1}^k n_l \theta_l}\frac{d\theta_1\dots d\theta_k}{\pa{2\pi}^k}$$
	$$=\widehat{g_N}\pa{n_j}\int_{\T_{k-1}}F_{N,k-1}\pa{\theta_1,\dots,\wt{\theta}_{j},\dots,\theta_k}e^{-i\sum_{l\not=j,\;l=1}^k n_l \theta_l}e^{-in_j\theta_i}\frac{d\theta_1\dots d\wt{\theta}_j\dots d\theta_k}{\pa{2\pi}^{k-1}}$$
	$$=\widehat{g_N}\pa{n_j}\wh{F_{N,k-1}}\pa{n_1,\dots, \underbrace{n_i+n_j}_{i\text{-th position}},\dots,\wt{n_j},\dots, n_k}.$$
	Similarly 
	$$\int_{\T_k}g_{N}\pa{\theta_i-\theta_j}F_{N,k-1}\pa{\theta_1,\dots,\wt{\theta}_{i},\dots,\theta_k}e^{-i\sum_{l=1}^k n_l \theta_l}\frac{d\theta_1\dots d\theta_k}{\pa{2\pi}^k}$$
	$$=\widehat{g_N}\pa{n_i}\wh{F_{N,k-1}}\pa{n_1,\dots,\wt{n_i},\dots,\underbrace{n_i+n_j}_{j\text{-th position}},\dots,n_k}.$$
	The above implies that 
	\begin{equation}\label{eq:fourier_of_marginals_I}
		\begin{aligned}
			\mathcal{F}_{\T^k}\Bigg(& \sum_{i<j\leq k}\Bigg\{\frac{g_{N}\pa{\cdot_i-\cdot_j}}{2} 
			\Big( F_{N,k-1}\pa{\cdot_1,\dots,\wt{\cdot}_{j},\dots,\cdot_k}\\
			&+F_{N,k-1}\pa{\cdot_1,\dots,\wt{\cdot}_{i},\dots,\cdot_k}\Big)
			-F_{N,k}\pa{\cdot_1,\dots,\cdot_k}\Bigg\}\Bigg)\pa{n_1,\dots,n_k}\\
			&= \sum_{i<j\leq k}\Bigg\{ \frac{\widehat{g_N}\pa{n_j}}{2}\wh{F_{N,k-1}}\pa{n_1,\dots, \underbrace{n_i+n_j}_{i\text{-th position}},\dots,n_k}\\
			&+\frac{\widehat{g_N}\pa{n_i}}{2}\wh{F_{N,k-1}}\pa{n_1,\dots,\underbrace{n_i+n_j}_{j\text{-th position}},\dots,n_k}\\
			& - \wh{F_{N,k}}\pa{n_1,\dots,n_k}\Big\}.
		\end{aligned}
	\end{equation}
	where we used the notation of $\F_{\T^l}\pa{f}\pa{n_1,\dots,n_l}=\wh{f}\pa{n_1,\dots,n_l}$ when $f\in \PP\pa{\T^l}$. 
	Next, due to the symmetry of $F_N$, we see that for any $i\leq k$
	\begin{equation}\nonumber 
		\begin{aligned}
			&\int_{\T^k}\pa{\int_{\T}g_{N}\pa{\theta_i-\theta_{k+1}} 
				F_{N,k}\pa{\theta_1,\dots,\wt{\theta}_{i},\dots,\theta_{k+1}}\frac{d\theta_{k+1}}{2\pi}}e^{-i\sum_{l=1}^k n_l \theta_l}\frac{d\theta_1\dots d\theta_k}{\pa{2\pi}^k}\\
			& = \wh{g_N}\pa{n_i}\int_{\T_k} 
			F_{N,k}\pa{\theta_1,\dots,\wt{\theta}_{i},\dots,\theta_{k+1}}e^{-i\sum_{l\not=i,\;l=1}^k n_l \theta_l}e^{-in_i\theta_{k+1}}\frac{d\theta_1\dots d\wt{\theta}_i\dots d\theta_{k+1}}{\pa{2\pi}^k}\\
			&=\wh{g_N}\pa{n_i}\wh{F_{N,k}}\pa{n_1,\dots,n_k},
		\end{aligned}
	\end{equation}
	and consequently
	\begin{equation}\label{eq:fourier_of_marginals_II}
		\begin{aligned}
			&\mathcal{F}_{\T^k}\Bigg( \sum_{i\leq k}\Bigg\{\int_{-\pi}^{\pi}g_{N}\pa{\cdot_i-\theta_{k+1}} 
			F_{N,k}\pa{\cdot_1,\dots,\wt{\cdot}_{i},\dots,\theta_{k+1}}\frac{d\theta_{k+1}}{2\pi}
			\\
			&-F_{N,k}\pa{\cdot_1,\dots,\cdot_k}\Bigg\}\pa{n_1,\dots,n_k}
			=\wh{F_{N,k}}\pa{n_1,\dots,n_k}\sum_{i\leq k}\pa{\wh{g_N}\pa{n_i}-1}.
		\end{aligned}
	\end{equation}
	Combining \eqref{eq:fourier_of_marginals_I} and \eqref{eq:fourier_of_marginals_II} with the BBGKY hierarchy \eqref{eq:hierarchy} yields
	\begin{equation}\nonumber
		\begin{aligned}
			&\partial_t \wh{F_{N,k}}\pa{n_1,\dots,n_k} = \frac{2\lambda N}{N-1}\sum_{i<j\leq k}\Bigg\{ \frac{\widehat{g_N}\pa{n_j}}{2}\wh{F_{N,k-1}}\pa{n_1,\dots, \underbrace{n_i+n_j}_{i\text{-th position}},\dots,n_k}\\
			&+\frac{\widehat{g_N}\pa{n_i}}{2}\wh{F_{N,k-1}}\pa{n_1,\dots,\underbrace{n_i+n_j}_{j\text{-th position}},\dots,n_k} - \wh{F_{N,k}}\pa{n_1,\dots,n_k}\Big\}.\\
			&+\frac{\lambda N\pa{N-k}}{N-1}\wh{F_{N,k}}\pa{n_1,\dots,n_k}\sum_{i\leq k}\pa{\wh{g_N}\pa{n_i}-1}.
		\end{aligned}
	\end{equation}
	Since the fact that $f$ is symmetric implies that so is $\wh{f}$ (see Appendix \ref{app:extra}) 
	we conclude \eqref{eq:fourier_of_F_N_k}.
	
	To attain \eqref{eq:fourier_of_F_N_k_nice} we notice that 
	$$2\sum_{i<j \leq k} 1= k(k-1)$$
	and rearrange \eqref{eq:fourier_of_F_N_k} .
\end{proof}
An immediate corollary of the above is the following:

\begin{corollary}\label{cor:recursive}
	A recursive formula for the $k$-th marginals $\br{F_{N,k}}_{N\in\N}$ is given by
	\begin{equation}\label{eq:recursive}
		\begin{aligned}
			&\wh{F_{N,k}}\pa{n_1,\dots,n_k,t} = e^{-\frac{\lambda N}{N-1}\pa{\pa{N-k}\sum_{l\leq k}\pa{1-\wh{g_N}\pa{n_l}}+k\pa{k-1}}t}\wh{F_{N,k}}\pa{n_1,\dots,n_k,0}\\ 		
			&+\frac{\lambda N}{N-1}\sum_{i<j\leq k}\pa{\widehat{g_N}\pa{n_i}+\widehat{g_N}\pa{n_j}}\int_{0}^t e^{-\frac{\lambda N}{N-1}\pa{\pa{N-k}\sum_{l\leq k}\pa{1-\wh{g_N}\pa{n_l}}+k\pa{k-1}}\pa{t-s}} \\
			&\qquad\qquad\qquad\qquad\qquad\qquad\qquad\qquad\wh{F_{N,k-1}}\pa{n_1,\dots, n_i+n_j,\dots,n_k,s}ds.
		\end{aligned}
	\end{equation}
	Consequently
	\begin{equation}\label{eq:expression_for_fourier_F_N_2}
		\begin{aligned}
			&\wh{F_{N,2}}\pa{n_1,n_2,t} \\
			&=e^{-\frac{\lambda N}{N-1}\pa{\pa{N-2}\sum_{l\leq 2}\pa{1-\wh{g_N}\pa{n_l}}+2}t}\wh{F_{N,2}}\pa{n_1,n_2,0}+
			\pa{\widehat{g_N}\pa{n_1}+\widehat{g_N}\pa{n_2}} \\
			&\frac{e^{-\frac{\lambda N}{N-1}\pa{\pa{N-2}\sum_{l\leq 2}\pa{1-\wh{g_N}\pa{n_l}}+2}t}- e^{-\lambda N\pa{1-\wh{g_N}\pa{n_1+n_2}}t}}{\pa{N-1}\pa{1-\wh{g_N}\pa{n_1+n_2}}-\pa{N-2}\sum_{l\leq 2}\pa{1-\wh{g_N}\pa{n_l}}-2}\wh{F_{N,1}}\pa{n_1+n_2,0},	
		\end{aligned}
	\end{equation}
	where we define $\frac{e^{\alpha t}-e^{\beta t}}{\alpha-\beta}$ to be $te^{\alpha t}$ if $\alpha=\beta$. 
\end{corollary}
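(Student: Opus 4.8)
The plan is to treat \eqref{eq:fourier_of_F_N_k_nice}, for each fixed $N$, each fixed $k$, and each fixed multi-index $\pa{n_1,\dots,n_k}\in\Z^k$, as a \emph{scalar linear ODE} in the single function $t\mapsto\wh{F_{N,k}}\pa{n_1,\dots,n_k,t}$. The crucial observation is that the coefficient of $\wh{F_{N,k}}$ on the right-hand side of \eqref{eq:fourier_of_F_N_k_nice} does not depend on $t$; abbreviating its negative by
$$
A_{N,k}=A_{N,k}\pa{n_1,\dots,n_k}=\frac{\lambda N}{N-1}\pa{\pa{N-k}\sum_{l\leq k}\pa{1-\wh{g_N}\pa{n_l}}+k\pa{k-1}},
$$
equation \eqref{eq:fourier_of_F_N_k_nice} reads $\partial_t\wh{F_{N,k}}=-A_{N,k}\wh{F_{N,k}}+S_{N,k}(t)$, where the forcing term $S_{N,k}(t)=\frac{\lambda N}{N-1}\sum_{i<j\leq k}\pa{\wh{g_N}\pa{n_i}+\wh{g_N}\pa{n_j}}\wh{F_{N,k-1}}\pa{n_1,\dots,n_i+n_j,\dots,n_k,t}$ involves only the $(k-1)$-st marginal and is therefore regarded as prescribed.

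First I would multiply through by the integrating factor $e^{A_{N,k}t}$ and integrate in time from $0$ to $t$; there is no analytic subtlety, since the generator of the master equation is a bounded linear operator and the domain is compact (as noted just before Lemma \ref{lem:fourier_of_F_N_k}), so every Fourier coefficient is smooth in $t$. Duhamel's formula then yields $\wh{F_{N,k}}(t)=e^{-A_{N,k}t}\wh{F_{N,k}}(0)+\int_0^t e^{-A_{N,k}(t-s)}S_{N,k}(s)\,ds$, which is precisely \eqref{eq:recursive}.

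For the specialisation to $k=2$ there is only the single pair $i=1$, $j=2$, and now --- unlike the general case --- the forcing is completely explicit: by \eqref{eq:evolution_of_first_marginal_fourier_solution} (the $k=1$ instance of \eqref{eq:fourier_of_F_N_k_nice}) we have $\wh{F_{N,1}}\pa{n_1+n_2,s}=e^{-\lambda N\pa{1-\wh{g_N}\pa{n_1+n_2}}s}\wh{F_{N,1}}\pa{n_1+n_2,0}$. Substituting this into \eqref{eq:recursive} with $k=2$ reduces the matter to the elementary integral $\int_0^t e^{-A_{N,2}(t-s)}e^{-\beta s}\,ds=\frac{e^{-\beta t}-e^{-A_{N,2}t}}{A_{N,2}-\beta}$, where $\beta=\lambda N\pa{1-\wh{g_N}\pa{n_1+n_2}}$ and the value $t\,e^{-A_{N,2}t}$ is understood when $A_{N,2}=\beta$ --- which is exactly the convention stated in the corollary. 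It then remains to expand $A_{N,2}=\frac{\lambda N}{N-1}\pa{\pa{N-2}\sum_{l\leq 2}\pa{1-\wh{g_N}\pa{n_l}}+2}$, pull the prefactor $\frac{\lambda N}{N-1}$ of $S_{N,2}$ into the denominator $A_{N,2}-\beta$, and flip a sign; this turns the denominator into $\pa{N-1}\pa{1-\wh{g_N}\pa{n_1+n_2}}-\pa{N-2}\sum_{l\leq 2}\pa{1-\wh{g_N}\pa{n_l}}-2$ and produces \eqref{eq:expression_for_fourier_F_N_2} verbatim.

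I do not expect any genuine obstacle: the argument is the integrating-factor solution of a scalar linear ODE followed by one explicit exponential integral. The only points requiring care are (a) recording that the coefficient multiplying $\wh{F_{N,k}}$ in \eqref{eq:fourier_of_F_N_k_nice} is truly time-independent, so that the elementary integrating factor applies without modification, and (b) the sign-and-rearrangement bookkeeping needed to bring $A_{N,2}-\beta$ into the precise denominator appearing in \eqref{eq:expression_for_fourier_F_N_2}, together with the degenerate case $A_{N,2}=\beta$, which is absorbed into the stated convention for $\frac{e^{\alpha t}-e^{\beta t}}{\alpha-\beta}$.
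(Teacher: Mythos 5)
Your proposal is correct and follows essentially the same route as the paper: the paper likewise obtains \eqref{eq:recursive} as the Duhamel/integrating-factor solution of the scalar linear ODE \eqref{eq:fourier_of_F_N_k_nice} with time-independent coefficient, and then derives \eqref{eq:expression_for_fourier_F_N_2} by inserting the explicit $k=1$ solution \eqref{eq:evolution_of_first_marginal_fourier_solution} and evaluating the exponential integral via \eqref{eq:useful_exponential} with the stated convention for the degenerate case. Your sign-and-denominator bookkeeping checks out against \eqref{eq:expression_for_fourier_F_N_2}.
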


\begin{proof}
	\eqref{eq:recursive} is a simple ODE solution to \eqref{eq:fourier_of_F_N_k_nice}. Plugging the solution for the case $k=1$ (which is given by \eqref{eq:evolution_of_first_marginal_fourier_solution}) in the identity for $k=2$ gives
	\begin{equation}\nonumber
		\begin{aligned}
			&\wh{F_{N,2}}\pa{n_1,n_2,t} = e^{-\frac{\lambda N}{N-1}\pa{\pa{N-2}\sum_{l\leq 2}\pa{1-\wh{g_N}\pa{n_l}}+2}t}\wh{F_{N,2}}\pa{n_1,n_2,0}+ 		
			\\
			&\frac{\lambda N}{N-1}\pa{\widehat{g_N}\pa{n_1}+\widehat{g_N}\pa{n_2}}
			\int_{0}^t e^{-\frac{\lambda N}{N-1}\pa{\pa{N-2}\sum_{l\leq 2}\pa{1-\wh{g_N}\pa{n_l}}+2}\pa{t-s}} 
			\wh{F_{N,1}}\pa{n_1+n_2,s}ds \\
			&=e^{-\frac{\lambda N}{N-1}\pa{\pa{N-2}\sum_{l\leq 2}\pa{1-\wh{g_N}\pa{n_l}}+2}t}\wh{F_{N,2}}\pa{n_1,n_2,0} 		
			+\frac{\lambda N}{N-1}\pa{\widehat{g_N}\pa{n_1}+\widehat{g_N}\pa{n_2}}\\
			&\pa{\int_{0}^t e^{-\frac{\lambda N}{N-1}\pa{\pa{N-2}\sum_{l\leq 2}\pa{1-\wh{g_N}\pa{n_l}}+2}\pa{t-s}} 
				e^{\lambda N \pa{\wh{g_{N}}\pa{n_1+n_2}-1}s}ds}\wh{F_{N,1}}\pa{n_1+n_2,0}. 
		\end{aligned}
	\end{equation}
	
	Using the fact that
	\begin{equation}\label{eq:useful_exponential}
		\int_{0}^{t}e^{\alpha\pa{t-s}}e^{\beta s}ds = \frac{e^{\alpha t}-e^{\beta t}}{\alpha-\beta},
	\end{equation}
	with the convention that was mentioned in the statement of the corollary, we conclude \eqref{eq:expression_for_fourier_F_N_2}.
\end{proof}

With this in hand, we are ready to show our main theorems.

\section{The case of strong interactions}\label{sec:strong_interaction}

In this section we will show the emergence of order, and its propagation, in the case of strong interaction in the CL model. We start by noticing that Corollary \ref{cor:recursive} in \S\ref{sec:preliminaries} gives us an inkling to why Theorem \ref{thm:main_order} holds. Indeed, under the assumption that $\lim_{N\to\infty}N\epsilon_N^2=0$  we can show that
$$\lim_{N\to\infty} N\pa{1-\g(n)} =0$$
for any fixed $n$ and consequently, using \eqref{eq:expression_for_fourier_F_N_2}, we see that as long as $F_{N,1}(0)$ and $F_{N,2}(0)$ converge weakly to $f_1$ and $f_2$ respectively we have that
$$\wh{f_2}\pa{n_1,n_2,t}=\lim_{N\to\infty}\wh{F_{N,2}}\pa{n_1,n_2,t} = e^{-2\lambda t} \wh{f}_2\pa{n_1,n_2} + \pa{1-e^{-2\lambda t}}\wh{f_1}\pa{n_1+n_2} $$
$$=\F_{\T^2}\pa{e^{-2\lambda t}f_2\pa{\cdot_1,\cdot_2}+\pa{1-e^{-2\lambda t}}f_1\pa{\cdot_1}\delta\pa{\cdot_2-\cdot_1}}\pa{n_1,n_2}.$$
In other words
$$f_2\pa{\theta_1,\theta_2,t}=e^{-2t}f_2\pa{\theta_1,\theta_2} + \pa{1-e^{-2t}}f_1\pa{\theta_1}\delta\pa{\theta_2-\theta_1}$$
which fits the statements of Theorem \ref{thm:main_order}. Let us show the proof in the general case:

\begin{proof}[Proof of Theorem \ref{thm:main_order}]
	Using Lemma \ref{lem:approx_g}, we find that
	$$\abs{\g (n) -1 - \frac{m_2}{2}\epsilon_N^2 n^2} \leq C \epsilon_N^3 \abs{n}^3$$
	for all $n\in\Z$. Thus, if $\lim_{N\to\infty}N\epsilon_N^2=0$ we have that
	$$0\leq N\pa{1-\g(n)} \leq \pa{\frac{m_2}{2}n^2+C \epsilon_N \abs{n}^3}N\epsilon_N^2, $$
	where we used the fact that the Fourier coefficient of any real and symmetric probability density is always real and bounded in absolute value by $1$. \\
	We conclude from the above that for any $n\in\Z$ we have that 
	$$\lim_{N\to\infty} N\pa{1-\g(n)} =0.$$
	Next, we recall that Lemma \ref{lem:weak_convergence_and_fourier} assures us that for any $\pa{n_1,\dots,n_k}\in\Z^k$ we have that 
	$$\lim_{N\to\infty}\wh{F_{N,k}(0)}\pa{n_1,\dots,n_k}=\wh{f_k}\pa{n_1,\dots,n_k}.$$
	Moreover, since the Fourier coefficients of any probability measure are bounded uniformly by $1$, we can apply the Dominated Convergence Theorem to our recursive formula, \eqref{eq:recursive}, and conclude that for any $t>0$ and any $k\in\N$, $\lim_{N\to\infty}\wh{F_{N,k}}\pa{n_1,\dots,n_k,t}=\wh{f_{k}}\pa{n_1,\dots,n_k,t}$ exists and satisfies\footnote{We need to be slightly careful here and employ an inductive argument to show that $f_{k-1}(t)$ is indeed a probability measure first. This is a very straightforward argument and as such we skip the details here.}
	\begin{equation}\label{eq:recursive_limit_in_fourier}
		\begin{aligned}
			&\wh{f_{k}}\pa{n_1,\dots,n_k,t} = e^{-\lambda k\pa{k-1}t}\wh{f_k}\pa{n_1,\dots,n_k}
			\\+&2\lambda\sum_{i<j\leq k}\int_{0}^t e^{-\lambda k\pa{k-1}\pa{t-s}} 
			\wh{f_{k-1}}\pa{n_1,\dots, n_i+n_j,\dots,n_k,s}ds,
		\end{aligned}
	\end{equation}
	where we have used the fact that $\lim_{N\to\infty}\g(n)=1$ for any $n\in\Z$. This shows \eqref{eq:limit_of_F_N_K_no_order} due to the uniqueness of the Fourier coefficients and the fact that
	$$\int_{\T^k} f_{k-1}\pa{\theta_1,\dots,\wt{\theta_{i}},\dots, \theta_k}\delta\pa{\theta_i-\theta_j}e^{-i \sum_{l=1}^k n_l\theta_l}\frac{d\theta_1\dots d\theta_k}{\pa{2\pi}^k}$$
	$$=\int_{\T^k} f_{k-1}\pa{\theta_1,\dots,\wt{\theta_{i}},\dots, \theta_k}e^{-i \sum_{l\not=i,\;l=1}^k n_l\theta_l}e^{-in_i\theta_j}\frac{d\theta_1\dots d\wt{\theta_i}\dots d\theta_k}{\pa{2\pi}^{k-1}}$$
	$$=\wh{f_{k-1}}\pa{n_1,\dots,n_i+n_j,\dots, n_k}.$$
	To show the convergence to an $f_1$-ordered state as time goes to infinity we notice that, just like the Lemma \ref{lem:weak_convergence_and_fourier} and by utilising Lemma \ref{lem:order_in_fourier}, it is enough for us to show that 
	$$\lim_{t\to\infty}\wh{f_k}\pa{n_1,\dots,n_k,t} = \wh{f_1}\pa{\sum_{j=1}^k n_j}.$$
	We will achieve this by showing that for any $k\geq 2$ there exists an explicit constant $c_k$ which depends only on $k$ such that 
	\begin{equation}\nonumber
		\abs{\wh{f_k}\pa{n_1,\dots,n_k,t}-\wh{f_1}\pa{\sum_{j=1}^k n_j}}\leq c_k e^{-2\lambda t}.
	\end{equation}
	We start by noticing that for $k=1$ \eqref{eq:recursive_limit_in_fourier} implies that 
	$$\wh{f_1}(n,t)=\wh{f_1}(n).$$
	Consequently, for $k=2$ we have that 
	\begin{equation}\label{eq:nice_eq_for_2}
		\begin{aligned}
			&\wh{f_{2}}\pa{n_1,n_2,t} = e^{-2\lambda t}\wh{f_2}\pa{n_1,n_2}+2\lambda\int_{0}^t e^{-2\lambda \pa{t-s}} 
			\wh{f_{1}}\pa{n_1+n_2,s}ds\\
			&= e^{-2\lambda t}\wh{f_2}\pa{n_1,n_2}+2\lambda\pa{\int_{0}^t e^{-2\lambda \pa{t-s}}ds} 
			\wh{f_{1}}\pa{n_1+n_2}\\
			&=e^{-2\lambda t}\wh{f_2}\pa{n_1,n_2} + \pa{1-e^{-2\lambda t}}\wh{f_1}\pa{n_1+n_2},
		\end{aligned}
	\end{equation}
	from which we find that 
	$$\abs{\wh{f_2}\pa{n_1,n_2,t}-\wh{f_1}\pa{n_1+n_2}} \leq 2 e^{-2\lambda t}=c_2e^{-2\lambda t}.$$
	We continue by induction: assume the claim holds for $k-1\geq 2$ and consider $k$. Since
	$$2\lambda \sum_{i<j\leq k}\int_0^t e^{-\lambda k\pa{k-1}\pa{t-s}}ds =\lambda k\pa{k-1}\int_0^t e^{-\lambda k\pa{k-1}\pa{t-s}}ds = 1- e^{-\lambda k\pa{k-1}t} $$
	we find that
	$$\abs{\wh{f_k}\pa{n_1,\dots,n_k,t} - \wh{f_1}\pa{\sum_{j=1}^k n_j}}\leq e^{-\lambda k\pa{k-1}t}\abs{\wh{f_1}\pa{\sum_{j=1}^k n_j}} + \Bigg| e^{-\lambda k\pa{k-1}t}\wh{f_k}\pa{n_1,\dots,n_k}$$
	$$
	+2\lambda\sum_{i<j\leq k}\int_{0}^t e^{-\lambda k\pa{k-1}\pa{t-s}} 
	\pa{\wh{f_{k-1}}\pa{n_1,\dots, n_i+n_j,\dots,n_k,s}-\wh{f_1}\pa{\sum_{j=1}^k n_j}}ds\Bigg|$$
	$$\leq 2 e^{-\lambda k\pa{k-1}t} + 2\lambda c_{k-1} \sum_{i<j\leq k}\int_{0}^t e^{-\lambda k\pa{k-1}\pa{t-s}} e^{-2\lambda s}ds$$
	$$=2 e^{-\lambda k\pa{k-1}t} + \lambda c_{k-1}k\pa{k-1} \frac{ e^{-2\lambda t}-e^{-\lambda k\pa{k-1}t} }{\lambda\pa{k\pa{k-1}-2}} $$
	where we have used \eqref{eq:useful_exponential} and the fact that $2\sum_{i<j\leq k}1 = k(k-1)$. Since $k\geq 3$ we see that 
	$$\abs{\wh{f_k}\pa{n_1,\dots,n_k,t} - \wh{f_1}\pa{\sum_{j=1}^k n_j}}\leq 2 e^{-\lambda k\pa{k-1}t} +  c_{k-1}k\pa{k-1} \frac{ e^{-2\lambda t}}{k\pa{k-1}-2} $$
	$$\leq \pa{2+\frac{c_{k-1}k\pa{k-1}}{k\pa{k-1}-2}}e^{-2\lambda t} = c_k e^{-2\lambda t}.$$
	We have thus shown the first statement of the theorem.\\ 
	Next, we show the propagation of order by induction. Recall that according to Lemma \ref{lem:purely_ordered_state} it will be enough for us to show that 
	$$\wh{f_k}\pa{n_1,\dots,n_k,t} = \wh{f_1}\pa{\sum_{j=1}^k n_j}$$
	for any $t>0$ and $\pa{n_1,\dots,n_k}\in\Z^k$.  Using Lemma \ref{lem:order_in_fourier} and the fact that $\br{F_{N}(0)}_{N\in\N}$ is $f_1-$ordered we conclude that 
	$$\wh{f_2}\pa{n_1,n_2}=\lim_{N\to\infty}\wh{F_{N,2}}\pa{n_1,n_2,0}=\wh{f_1}\pa{n_1+n_2}.$$
	Using the fact that $\wh{f_1}(n,t)=\wh{f_1}(n)$ for all $t>0$ together with the above and \eqref{eq:nice_eq_for_2} we find that
	\begin{equation}\nonumber
		\begin{aligned}
			&\wh{f_{2}}\pa{n_1,n_2,t} =e^{-2\lambda t}\wh{f_2}\pa{n_1,n_2} + \pa{1-e^{-2\lambda t}}\wh{f_1}\pa{n_1+n_2} = \wh{f_1}\pa{n_1+n_2},
		\end{aligned}
	\end{equation}
	which shows our base induction step. We now assume that 
	$$\wh{f_{k-1}}\pa{n_1,\dots,n_{k-1},t} = \wh{f_1}\pa{\sum_{l=1}^{k-1}n_l}$$ 
	for all $t>0$ and $\pa{n_1,\dots,n_{k-1}}\in\Z^{k-1}$, where $k-1\geq 2$. As in the case $k=2$ we know that the fact that $\br{F_{N}(0)}_{N\in\N}$ is $f_1$-ordered implies that 
	$$\wh{f_k}\pa{n_1,\dots,n_k}=\lim_{N\to\infty}\wh{F_{N,k}}\pa{n_1,\dots,n_k,0}=\wh{f_1}\pa{\sum_{j=1}^k  n_j}.$$
	Using our recursive formula \eqref{eq:recursive_limit_in_fourier} we find that for any $t>0$
	\begin{equation}\nonumber
		\begin{aligned}
			&\wh{f_{k}}\pa{n_1,\dots,n_k,t} = e^{-\lambda k\pa{k-1}t}\wh{f_k}\pa{n_1,\dots,n_k}
			\\+&2\lambda\sum_{i<j\leq k}\int_{0}^t e^{-\lambda k\pa{k-1}\pa{t-s}} 
			\wh{f_{k-1}}\pa{n_1,\dots, n_i+n_j,\dots,n_k,s}ds\\
			=&e^{-\lambda k\pa{k-1}t}\wh{f_{1}}\pa{\sum_{j=1}^k n_j}
			+2\lambda\sum_{i<j\leq k}\int_{0}^t e^{-\lambda k\pa{k-1}\pa{t-s}} 
			\wh{f_{1}}\pa{\sum_{j=1}^k n_j}ds
		\end{aligned}
	\end{equation}
	\begin{equation}\nonumber
		\begin{aligned}
			&=e^{-\lambda k\pa{k-1}t}\wh{f_{1}}\pa{\sum_{j=1}^k n_j}
			+2\lambda\sum_{i<j\leq k}\pa{\int_{0}^t e^{-\lambda k\pa{k-1}\pa{t-s}}ds} 
			\wh{f_{1}}\pa{\sum_{j=1}^k n_j}\\
			&=e^{-\lambda k\pa{k-1}t}\wh{f_{1}}\pa{\sum_{j=1}^k n_j}+\pa{1-e^{-\lambda k\pa{k-1}t}}\wh{f_{1}}\pa{\sum_{j=1}^k n_j}=\wh{f_{1}}\pa{\sum_{j=1}^k n_j}.
		\end{aligned}
	\end{equation}
	The proof, and with it this section, is now complete
\end{proof}

\section{The case of balanced interactions}\label{sec:balanced_interaction}

In this penultimate section, we consider the case where the interaction and time scaling are balanced. Surprisingly, Corollary \ref{cor:recursive} in \S\ref{sec:preliminaries} not only gives us the intuition to why Theorem \ref{thm:main_order} is true but also gives us the means to show that in the case where $N\epsilon_N^2=1$ the solutions to the rescaled CL model can't be ordered. The key idea in showing this is expressed in the following lemma:

\begin{lemma}\label{lem:necessary_for_order}
	Consider a family of symmetric probability densities $F_N\in \PP\pa{\T^N}$ with $N\in\N$. If $\br{F_N}_{N\in\N}$ is $f-$ordered then 
	$$\lim_{N\to\infty}\wh{F_{N,2}}\pa{n,-n} = 1.$$
\end{lemma}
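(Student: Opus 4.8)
The plan is to translate the definition of $f$-ordered into a statement about the Fourier coefficients of the second marginal and then simply read off the special case $(n_1,n_2)=(n,-n)$. By Lemma~\ref{lem:order_in_fourier}, if $\br{F_N}_{N\in\N}$ is $f$-ordered then for every $(n_1,n_2)\in\Z^2$ one has $\wh{F_{N,2}}\pa{n_1,n_2}\to\wh{f}\pa{n_1+n_2}$ as $N\to\infty$. Applying this with $n_1=n$ and $n_2=-n$ gives $\wh{F_{N,2}}\pa{n,-n}\to\wh{f}\pa{0}$.

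It then remains only to observe that $\wh{f}\pa{0}=1$: since $f\in\PP\pa{\T}$ is a probability density with respect to $\frac{d\theta}{2\pi}$, we have $\wh{f}\pa{0}=\int_{\T}f\pa{\theta}\frac{d\theta}{2\pi}=1$. Combining the two displays yields $\lim_{N\to\infty}\wh{F_{N,2}}\pa{n,-n}=1$, as claimed.

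There is essentially no obstacle here; the lemma is an immediate corollary of Lemma~\ref{lem:order_in_fourier} together with the normalisation of $f$. If one prefers not to invoke Lemma~\ref{lem:order_in_fourier}, one can argue directly: $f$-orderedness means $\Pi_2\pa{dF_N}\to f\pa{\theta_1}\delta\pa{\theta_2-\theta_1}$ weakly, and testing against the bounded continuous function $e^{-in\theta_1}e^{in\theta_2}$ (using Remark~\ref{rem:about_notation} to integrate against the delta) gives $\wh{F_{N,2}}\pa{n,-n}\to\int_{\T}f\pa{\theta_1}e^{-in\theta_1}e^{in\theta_1}\frac{d\theta_1}{2\pi}=\int_{\T}f\pa{\theta_1}\frac{d\theta_1}{2\pi}=1$. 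Either route is a one-line verification, so the proof will be short; the real work of the section lies in the subsequent step of showing, via the explicit formula \eqref{eq:expression_for_fourier_F_N_2}, that when $N\epsilon_N^2=1$ the quantity $\wh{F_{N,2}}\pa{n,-n}$ fails to converge to $1$, thereby contradicting this necessary condition.
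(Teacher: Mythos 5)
Your proof is correct and is essentially identical to the paper's: both invoke Lemma \ref{lem:order_in_fourier} with $(n_1,n_2)=(n,-n)$ and then use $\wh{f}(0)=1$ from the normalisation of $f\in\PP(\T)$. The alternative direct argument you sketch is also fine but adds nothing beyond the one-line application of that lemma.
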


\begin{proof}
	Using Lemma \ref{lem:order_in_fourier} we see that if $\br{F_N}_{N\in\N}$ is $f-$ordered then
	$$\lim_{N\to\infty}\wh{F_{N,2}}\pa{n,-n}= \wh{f}\pa{n+(-n)}=\wh{f}(0)=1$$
	as $f\in \PP\pa{\T}$. 
\end{proof}

\begin{proof}[Proof of Theorem \ref{thm:main_in_between}]
	The fact that $\br{F_N(t)}_{N\in\N}$ is not chaotic has been shown in the works of Carlen et al. \cite{CCDW2013,CDW2013}. To show the lack of order we start by noticing that in this setting, Lemma \ref{lem:approx_g} implies that for any $n\in \Z$
	$$\lim_{N\to\infty} N\pa{1-\g(n)} = \frac{m_2 n^2}{2}.$$
	Consequently, assuming that $\br{F_{N,1}(0)}_{N\in\N}$ and $\br{F_{N,2}(0)}_{N\in\N} $ converge weakly to $f_1$ and $f_2$ respectively and using \eqref{eq:expression_for_fourier_F_N_2}, we find that for any $t>0$, 
	\begin{equation}\nonumber
		\begin{aligned}
			&\lim_{N\to\infty}\wh{F_{N,2}}\pa{n_1,n_2,t} =e^{-\lambda \pa{\frac{m_2}{2}\pa{n_1^2+n_2^2}+2}t}\wh{f_{2}}\pa{n_1,n_2}\\
			&
			+\frac{4\pa{e^{-\lambda \pa{\frac{m_2}{2}\pa{n_1^2+n_2^2}+2}t}- e^{-\frac{\lambda m_2}{2}\pa{n_1+n_2}^2 t}}}{m_2\pa{\pa{n_1+n_2}^2-n_1^2-n_2^2} -4}\wh{f_{1}}\pa{n_1+n_2}
		\end{aligned}
	\end{equation}
	In particular, for any $n\not=0$ and $t>0$
	$$\lim_{N\to\infty}\wh{F_{N,2}}\pa{n,-n,t} =e^{-\lambda \pa{m_2 n^2+2}t}\wh{f_{2}}\pa{n,-n}-
	\frac{2\pa{e^{-\lambda \pa{m_2 n^2+2}t}- 1}}{m_2n^2+2} $$
	$$=e^{-\lambda \pa{m_2 n^2+2}t}\wh{f_{2}}\pa{n,-n}+
	\frac{2\pa{1-e^{-\lambda \pa{m_2 n^2+2}t}}}{m_2n^2+2}$$
	$$ \leq e^{-\lambda \pa{m_2 n^2+2}t} + \frac{2\pa{1-e^{-\lambda \pa{m_2 n^2+2}t}}}{m_2n^2+2} = \frac{m_2 n^2}{m_2n^2+2}e^{-\lambda \pa{m_2 n^2+2}t}  + \frac{2}{m_2n^2+2}$$
	$$< \frac{m_2 n^2}{m_2n^2+2}e^{-2 \lambda t}  + \frac{2}{m_2n^2+2} < 1.$$
	Due to Lemma \ref{lem:necessary_for_order} we conclude that $\br{F_N(t)}_{N\in\N}$ can't be ordered for any $t>0$, which completes the proof. 
\end{proof}

We conclude this short section with the proof of Theorem \ref{thm:not_all_hope_is_lost}.

\begin{proof}[Proof of Theorem \ref{thm:not_all_hope_is_lost}]
	Much like our previous proof, we start with the fact that in our setting
	$$\lim_{N\to\infty} N\pa{1-\g(n)} = \frac{m_2 n^2}{2}.$$
	Identity \eqref{eq:recursive_limit_in_fourier} together with the fact that $\br{F_{N,1}(0)}_{N\in\N}$ and $\br{F_{N,2}(0)}_{N\in\N} $ converge weakly to $f_1$ and $f_2$ respectively imply that
	$$\lim_{N\to\infty}\wh{F_{N,1}}\pa{n_1,t}=\lim_{N\to\infty}e^{\lambda N \pa{\wh{g_{\epsilon_N}}\pa{n}-1}t}\wh{F_{N,1}}\pa{n,0}=e^{-\frac{\lambda m_2}{2}n^2t}\wh{f_1}(n),$$
	and 
	\begin{equation}\nonumber
		\begin{aligned}
			&\lim_{N\to\infty}\wh{F_{N,2}}\pa{n_1,n_2,t} =e^{-\lambda \pa{\frac{m_2}{2}\pa{n_1^2+n_2^2}+2}t}\wh{f_{2}}\pa{n_1,n_2}\\
			&+
			\frac{4\pa{e^{-\lambda \pa{\frac{m_2}{2}\pa{n_1^2+n_2^2}+2}t}- e^{-\frac{\lambda m_2}{2}\pa{n_1+n_2}^2 t}}}{m_2\pa{\pa{n_1+n_2}^2-n_1^2-n_2^2} -4}\wh{f_{1}}\pa{n_1+n_2}.
		\end{aligned}
	\end{equation}
	The convergence of the Fourier coefficients together with Lemma \ref{lem:weak_convergence_and_fourier} imply the desired convergence to $f_1(t)$ and $f_2(t)$ given by the inverse transform of the above limits\footnote{More formally: if the Fourier coefficients of a given sequence of probability measures on $\T^k$ converge then the integration of that family against any trigonometric polynomial converges. As these polynomials are dense in $C_b\pa{\T^k}$ with respect to the uniform norm we conclude that the integration of that family against any bounded continuous function converges. This implies, according to the Riesz-Markov representation theorem on compact spaces, that the limit functional must be an integration against a probability measure whose Fourier coefficients are given by the limit of the Fourier coefficients of the original sequence.}. To show \eqref{eq:f_1_convergence} and \eqref{eq:f_2_convergence} we notice that 
	$$\lim_{t\to\infty}\wh{f_1}\pa{n_1,t} = \begin{cases}
		1,& n_1=0,\\
		0,& n_1\not=0,
	\end{cases} = \F_{\T}\pa{1}(n_1)$$
	and
	$$\lim_{t\to\infty}\wh{f_2}(n_1,n_2,t) = \begin{cases}
		\frac{4}{m_2 \pa{n_1^2+n_2^2}+4},& n_1+n_2=0,\\
		0,& n_1+n_2\not=0,
	\end{cases}= \begin{cases}
		\frac{2}{m_2 n_1^2+2},& n_1+n_2=0,\\
		0,& n_1+n_2\not=0,
	\end{cases}.$$
	The latter implies \eqref{eq:f_2_convergence} since (with the help of the Dominated Convergence Theorem) we have that
	$$\int_{\T^2} \pa{\sum_{j\in\Z}\frac{2}{m_2 j^2+2}e^{ij \pa{\theta_1-\theta_2}}}e^{-in_1\theta_1-in_2\theta_2}\frac{d\theta_1d\theta_2}{\pa{2\pi}^2}$$
	$$=\sum_{j\in\Z}\frac{2}{m_2 j^2+2}\int_{\T^2}e^{ij \pa{\theta_1-\theta_2}}e^{-in_1\theta_1-in_2\theta_2}\frac{d\theta_1d\theta_2}{\pa{2\pi}^2}=\sum_{j\in\Z}\frac{2\delta_{j,n_1}\delta_{j,-n_2}}{m_2 j^2+2}$$
	where $\delta_{i,j}$ is the Kronecker delta, and consequently
	$$\F_{\T^2}\pa{\mathcal{H}\pa{\cdot_1-\cdot_2}}\pa{n_1,n_2} = \begin{cases}
		\frac{2}{m_2 n_1^2+2},& n_1+n_2=0,\\
		0,& n_1+n_2\not=0.
	\end{cases}$$
\end{proof}

\section{Final remarks}\label{sec:final}

\subsection*{On the notion of order} Our definition of order (Definition \ref{def:order}) was motivated by our expectation to see \textit{total adherence} in the CL and other models -- a ``perpendicular'' behaviour to chaoticity. One might argue that a more appropriate name would be ``perfect order'' or ``perfect alignment'' to take into account that some partial order/alignment can also manifest (as might be indicated by Theorem \ref{thm:not_all_hope_is_lost}). However, to keep our introduction of this new asymptotic notion more coherent we elected to use the simpler term.\\ 
We would like to emphasise that the main idea behind the notion of order is that for any $k\in\N$ the limit process retains only \textit{one degree of randomness} (vs. chaoticity which has $k$ degrees of randomness). This means that this notion can be adapted to other situations where we don't necessarily expect that all the variables equal in the limit, but where one ``average element'' completely determines the limiting behaviour of any finite group of elements (for instance, a one dimensional chain of elements whose variables are always a fixed distance from each other). 
\subsection*{On the generation of order} As was mentioned in Remark \ref{rem:generation_of_order}, Theorem \ref{thm:main_order} guarantees the generation of order, though this statement is not as strong as we would hope. In particular, in order to see order appearing we need to consider the limiting marginals (i.e. take $N$ to infinity) and \textit{then} take time to infinity. It would be interesting to see if we can find an explicit function $t(N)$, that goes to infinity when $N$ goes to infinity, such that $F_{N,k}\pa{t\pa{N}}$ converges to an ordered state as $N$ goes to infinity. We suspect that to achieve this one might need a stronger notion of convergence than weak convergence of measures which is also \textit{quantitative}. 
\subsection*{Between order and chaos} The balanced setting, discussed in Theorems \ref{thm:main_in_between} and \ref{thm:not_all_hope_is_lost}, poses an interesting ``in between'' case between our order and suspected chaos. While no order is observed in this case, Theorem \ref{thm:not_all_hope_is_lost} suggests that there is still a chance we will see some partial adherence, at least in the second marginal, with deviations given by a fixed function. This motivates us to consider a potential notion of \textit{partial order}, where the delta functionals in \eqref{eq:def_order_simplified} are replaced by some functions that measure how close the variables may get. In other words $\Pi_1\pa{d\mu_N}$ converges to a profile $f$ and $\Pi_k\pa{d\mu_N}$ converges to something of the form
$$ \frac{1}{k!}\sum_{\sigma\in S_k}f\pa{\theta_{\sigma(1)}}\prod_{i=1}^{k-1} h\pa{\theta_{\sigma(i)}-\theta_{\sigma(i+1)}},$$
for some $h\in \PP\pa{\T}$ and where $S_k$ is the group of permutation of order $k$. It is unclear at this point if the above is suitable to capture the behaviour of even the simple CL model in the balanced scaling, but the investigation of such a notion is, in our opinion, an exciting prospect which we will pursue.
\subsection*{Additional models} The CL model did not only motivate the definition of the new notion of order -- it was also an ideal model to test it. One notable issue with this model, however, is its simplicity. In particular, its BBGKY hierarchy is closed -- something that doesn't happen in most many element models. It would be interesting to try and test the notion of order in other mean field models that should exhibit strong adherence. Prime candidates are swarming models such as the Bertin, Droz and Gr\'{e}goire model, which was introduced in \cite{BDG2006} and is mentioned in the works of Carlen et al \cite{CCDW2013,CDW2013}, and societal models such as the opinion models presented in the review paper of Chaintron and Diez, \cite{CD(II)2022}. Following on ideas presented in the original works on the CL model as well as in this paper, one would expect that the first step to deal with any mean field model which may exhibit a phenomena of order would be to find the appropriate scaling. This might not be as easy a feat as it is in the CL model and additional technical difficulties are expected due to the coupled BBGKY hierarchy.

\section*{Acknowledgement}
The author would like to thank the reviewers of this work for their careful reading and the numerous helpful suggestions they have given -- improving both the clarity and readability of the paper. 

\begin{appendix}
	
	\section{The behaviour of Fourier coefficients of rescaled and restricted probability densities}\label{app:fourier_behaviour}
	In this appendix we will prove Lemma \ref{lem:approx_g} by stating and proving two auxiliary lemmas.

	\begin{lemma}\label{lem:fourier_connection}
		Let $g\in \PP\pa{\R,dx}$. We define its \textit{$\epsilon$-truncated Fourier transform induced from $\pa{\T , \frac{d\theta}{2\pi}}$}, $\F_{\epsilon}\pa{g}$, to be the function
		$$\F_{\epsilon}(g)\pa{\xi} = \int_{-\frac{\pi}{\epsilon}}^{\frac{\pi}{\epsilon}}g(x)e^{-i\xi x}dx.$$
		Then for any $n\in\Z$
		\begin{equation}\nonumber 
			\widehat{g_{\epsilon}}(n) = \frac{\F_{\epsilon}(g)\pa{n\epsilon}}{\F_{\epsilon}(g)\pa{0}}.
		\end{equation}
		Moreover, if there exists $k\in\N$ such that 
		$$m_k=\int_{\R}\abs{x}^k g(x)dx<\infty$$
		then for any $n\in\Z$ we have that
		\begin{equation}\nonumber 
			\abs{\widehat{g_{\epsilon}}(n)-\F(g)\pa{n\epsilon}} \leq \frac{2\epsilon^k m_k}{\pi^k-\epsilon^km_k}
		\end{equation}
		whenever $\epsilon  < \frac{\pi}{\sqrt[k]{m_k}}$ and where $\F\pa{g}$ is the Fourier transform of $g$
		$$\F(g)\pa{\xi} = \int_{\R}g(x)e^{-i\xi x}dx.$$
	\end{lemma}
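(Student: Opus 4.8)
The plan is to handle the two assertions in turn; both reduce to an elementary change of variables followed by a Markov-type tail estimate, and there is no real obstacle beyond some bookkeeping. For the identity $\wh{g_\epsilon}(n) = \F_\epsilon(g)(n\epsilon)/\F_\epsilon(g)(0)$, I would start from Definition \ref{def:g_eps} and write
$$\wh{g_\epsilon}(n) = \int_{-\pi}^{\pi} g_\epsilon(\theta) e^{-in\theta}\,\frac{d\theta}{2\pi} = \frac{1}{2\pi\epsilon\wt{g}_\epsilon}\int_{-\pi}^{\pi} g\!\pa{\tfrac{\theta}{\epsilon}} e^{-in\theta}\, d\theta .$$
The substitution $x = \theta/\epsilon$ turns the last integral into $\int_{-\pi/\epsilon}^{\pi/\epsilon} g(x) e^{-in\epsilon x}\, dx = \F_\epsilon(g)(n\epsilon)$, while $2\pi\wt{g}_\epsilon = \int_{-\pi/\epsilon}^{\pi/\epsilon} g(x)\, dx = \F_\epsilon(g)(0)$, so the claimed formula follows immediately. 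This part is purely mechanical and uses nothing about moments.

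For the estimate, the organising quantity is the tail mass $T := \int_{\abs{x}>\pi/\epsilon} g(x)\, dx$. By Markov's inequality together with the finiteness of $m_k$ we get $T \le (\epsilon/\pi)^k \int_{\abs{x}>\pi/\epsilon}\abs{x}^k g(x)\, dx \le \epsilon^k m_k/\pi^k$, and the hypothesis $\epsilon < \pi/\sqrt[k]{m_k}$ forces $T < 1$, so that $\F_\epsilon(g)(0) = 1 - T > 0$ and the quotient from the first part is well defined. Writing $\F_\epsilon(g)(n\epsilon) = \F(g)(n\epsilon) - R$ with $R := \int_{\abs{x}>\pi/\epsilon} g(x) e^{-in\epsilon x}\, dx$ and $\abs{R} \le T$, a short algebraic manipulation gives
$$\wh{g_\epsilon}(n) - \F(g)(n\epsilon) = \frac{\F(g)(n\epsilon) - R}{1-T} - \F(g)(n\epsilon) = \frac{T\,\F(g)(n\epsilon) - R}{1-T} .$$
Using $\abs{\F(g)(n\epsilon)} \le 1$ and $\abs{R}\le T$ yields $\abs{\wh{g_\epsilon}(n) - \F(g)(n\epsilon)} \le 2T/(1-T)$; since $s\mapsto 2s/(1-s)$ is increasing on $[0,1)$, inserting the bound $T \le \epsilon^k m_k/\pi^k$ produces exactly $\tfrac{2\epsilon^k m_k}{\pi^k - \epsilon^k m_k}$.

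The only points requiring a little attention are checking positivity of the denominator $\F_\epsilon(g)(0)$ — which is precisely where the restriction on $\epsilon$ is used — and keeping track of the two distinct but similarly sized error contributions, namely $T$ coming from the normalisation and $R$ coming from the truncated numerator, both of which end up being dominated by the same tail mass. Everything else is routine estimation, so I would not anticipate a genuine difficulty in carrying this out.
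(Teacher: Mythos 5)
Your proposal is correct and follows essentially the same route as the paper: the same change of variables for the identity, and the same decomposition of the error into the tail contribution from the normalisation $\F_\epsilon(g)(0)$ and from the truncated numerator, each controlled by the Markov-type bound $\int_{\abs{x}>\pi/\epsilon}g(x)\,dx\leq \epsilon^k m_k/\pi^k$. No gaps.
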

	
	\begin{proof}
		By the definition of $g_{\epsilon}$ and using the fact that 
		$$\widetilde{g}_\epsilon = \frac{1}{2\pi}\F_{\epsilon}(g)\pa{0}$$
		we have that
		$$\widehat{g_{\epsilon}}(n) = \frac{1}{2\pi \epsilon\widetilde{g}_\epsilon}\int_{-\pi}^{\pi}g\pa{\frac{\theta}{\epsilon}}e^{-in\theta}d\theta
		= \frac{1}{\F_{\epsilon}(g)\pa{0}}\int_{-\frac{\pi}{\epsilon}}^{\frac{\pi}{\epsilon}}g(x)e^{-in \epsilon x}dx=\frac{\F_{\epsilon}(g)\pa{n\epsilon}}{\F_{\epsilon}(g)\pa{0}},$$
		which gives us the first claim. To show the second claim we start by noticing that 
		$$\abs{\F_{\epsilon}(g)\pa{\xi}-\F(g)\pa{\xi}} \leq \int_{\abs{x} > \frac{\pi}{\epsilon}}g(x)dx \leq \frac{\epsilon^k}{\pi^k}m_k.$$
		Consequently, since $\F\pa{g}(0)=1$, we have that if $\epsilon < \frac{\pi}{\sqrt[k]{m_k}}$
		$$\abs{\widehat{g_{\epsilon}}(n)-\F(g)\pa{n\epsilon}}=\abs{\frac{\F_{\epsilon}(g)\pa{n\epsilon}}{\F_{\epsilon}(g)\pa{0}}-\F(g)\pa{n\epsilon}}$$
		$$\leq \frac{1} {1-\abs{\F_{\epsilon}(g)\pa{0}-1}}\pa{\abs{\F_{\epsilon}(g)\pa{n\epsilon}-\F(g)\pa{n\epsilon}}+\abs{\F_{\epsilon}(g)\pa{0}-1}\abs{\F(g)\pa{n\epsilon}}}$$
		$$\leq \frac{2\frac{\epsilon^k}{\pi^k}m_k}{1-\frac{\epsilon^k}{\pi^k}m_k}=\frac{2\epsilon^k m_k}{\pi^k-\epsilon^km_k}.$$
		The proof is now complete. 
	\end{proof}

	\begin{lemma}\label{lem:some_fourier_trasnform_estimation}
		Let $g\in\PP\pa{\R,dx}$ be a symmetric probability density such that $m_k$, defined in the above lemma, is finite for some $k\in\N$ such that $k>2$. Then 
		\begin{equation}\nonumber 
			\abs{\F(g)\pa{\xi} - 1 +\frac{m_2 \xi^2}{2}} \leq \begin{cases}
				\frac{m_3}{3}\abs{\xi}^3, & k=3,\\
				\frac{m_4}{12}\abs{\xi}^4, & k>3,
			\end{cases}
			\qquad \forall\xi\in\R.
		\end{equation}
	\end{lemma}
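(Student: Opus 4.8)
The approach is to expand the exponential inside $\F(g)(\xi)=\int_{\R}g(x)e^{-i\xi x}\,dx$ by Taylor's formula to a suitable order, use the evenness of $g$ to annihilate the odd-power contributions, and control the remainder by the appropriate absolute moment of $g$.

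First I would record the elementary remainder estimate: for every $y\in\R$ and every integer $n\geq 0$,
$$\abs{e^{iy}-\sum_{j=0}^{n}\frac{(iy)^{j}}{j!}}\leq \frac{\abs{y}^{n+1}}{(n+1)!},$$
which follows by writing the left-hand side as $\frac{i^{n+1}}{n!}\int_{0}^{y}(y-t)^{n}e^{it}\,dt$ and bounding the integrand by $1$ in modulus. Since $g$ is an even probability density we have $\int_{\R}x\,g(x)\,dx=0$, and, whenever $m_3<\infty$, also $\int_{\R}x^{3}g(x)\,dx=0$; moreover $m_k<\infty$ for an integer $k>2$ automatically gives $m_1,m_2<\infty$ (and $m_3,m_4<\infty$ once $k\geq 4$), because $g$ is a probability density and $\abs{x}^{j}\leq 1+\abs{x}^{k}$ for $0\leq j\leq k$. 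These finiteness facts are what allow us to split off the low-order terms before estimating the remainder.

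For $k=3$ I would apply the estimate with $n=2$ and $y=-\xi x$, giving
$$\F(g)(\xi)=\int_{\R}g(x)\pa{1-i\xi x-\frac{\xi^{2}x^{2}}{2}}dx+\int_{\R}g(x)\,r_{2}(\xi x)\,dx=1-\frac{m_2\xi^{2}}{2}+\int_{\R}g(x)\,r_{2}(\xi x)\,dx,$$
where $\abs{r_{2}(\xi x)}\leq \frac{1}{6}\abs{\xi x}^{3}$; integrating this bound against $g$ controls the last integral by $\frac{m_3}{6}\abs{\xi}^{3}\leq \frac{m_3}{3}\abs{\xi}^{3}$, which is exactly the claimed inequality. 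For $k>3$, i.e.\ $k\geq 4$, I would instead use the estimate with $n=3$: the resulting $\xi^{3}$-term vanishes because $\int_{\R}x^{3}g(x)\,dx=0$, and the remainder is bounded by $\frac{1}{24}\abs{\xi x}^{4}$, so integrating against $g$ yields $\frac{m_4}{24}\abs{\xi}^{4}\leq \frac{m_4}{12}\abs{\xi}^{4}$.

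There is no genuinely difficult step here; what deserves a line of care is only the moment comparison used above and the fact that the evenness of $g$, together with the absolute convergence of the relevant integrals, legitimately removes the first-order (and, when $k\geq 4$, the third-order) term before the remainder is estimated. The slack between $\frac{1}{6}$ and $\frac{1}{3}$ (respectively $\frac{1}{24}$ and $\frac{1}{12}$) is harmless.
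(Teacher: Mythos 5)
Your proof is correct and follows essentially the same route as the paper: Taylor-expand the exponential, use the evenness of $g$ (and the finiteness of the lower moments) to remove the first- and, when $k\geq 4$, third-order terms, and bound the remainder by the relevant absolute moment. The only difference is that you estimate the complex remainder directly in integral form and obtain the sharper constants $\tfrac{1}{6}$ and $\tfrac{1}{24}$, whereas the paper splits into the cosine and sine parts and bounds each by $\tfrac{\abs{t}^{n+1}}{(n+1)!}$, which is precisely where its constants $\tfrac{1}{3}=2\cdot\tfrac{1}{3!}$ and $\tfrac{1}{12}=2\cdot\tfrac{1}{4!}$ come from.
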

	
	\begin{proof}
		From the definition of $m_2$ and the fact that $g$ is symmetric we find that 
		$$\abs{\F(g)\pa{\xi} - 1 + \frac{m_2 \xi^2}{2}} = \abs{\int_{\R}g(x)\pa{e^{-i\xi x} - 1 - i\xi x -\frac{\pa{i\xi}^2 x^2}{2!}}dx}$$
		$$\leq \int_{\R}g(x)\abs{\cos\pa{\xi x} - 1 +\frac{\xi^2 x^2}{2}}dx + \int_{\R}g(x)\abs{\sin\pa{\xi x} -\xi x}dx.$$
		Since
		$$\max\br{\abs{\cos\pa{t}-1-\frac{t^2}{2}},\abs{\sin\pa{t}-t}} \leq \frac{\abs{t}^3}{3!}$$
		we see that if $m_3<\infty$ then 
		\begin{equation}\nonumber
			\abs{\F(g)\pa{\xi} - 1 + \frac{m_2 \xi^2}{2}}\leq \frac{m_3}{3}\abs{\xi}^3. 
		\end{equation}
		If in addition we have that $m_4<\infty$ then, since
		$$\max\br{\abs{\cos\pa{t}-1-\frac{t^2}{2}},\abs{\sin\pa{t}-t+ \frac{t^3}{3!}}} \leq \frac{\abs{t}^4}{4!}$$
		and since 
		$$\int_{\R}x^3g(x)dx=0$$
		we can refine the above estimate to find that
		$$\abs{\F(g)\pa{\xi} - 1 + \frac{m_2 \xi^2}{2}} = \abs{\int_{\R}g(x)\pa{e^{-i\xi x} - 1 - i\xi x -\frac{\pa{i\xi}^2 x^2}{2!} -\frac{\pa{i\xi}^3 x^3}{3!}}dx}$$
		$$\leq \int_{\R}g(x)\abs{\cos\pa{\xi x} - 1 -\frac{\xi^2 x^2}{2}}dx + \int_{\R}g(x)\abs{\sin\pa{\xi x} - \xi x + \frac{\xi^3 x^3}{3!}}dx\leq \frac{m_4}{12}\abs{\xi}^4,$$
		which concludes the proof.
	\end{proof}
	
	\begin{proof}[Proof of Lemma \ref{lem:approx_g}]
		The proof is an immediate consequence of lemmas \ref{lem:fourier_connection} and  \ref{lem:some_fourier_trasnform_estimation}.	
	\end{proof}
	
	\section{Additional proofs}\label{app:extra}
	In this short appendix we will show a claim that was stated in the proof of Lemma  \ref{lem:fourier_of_F_N_k} and prove Lemma \ref{lem:weak_convergence_and_fourier}.

	\begin{lemma}\label{lem:symmtery_in_fourier}
		Let $f\in \PP\pa{\T^k}$. Then if $f$ is symmetric so it $\wh{f}$. 
	\end{lemma}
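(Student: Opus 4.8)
The plan is to reduce the statement to a single change of variables in the integral defining the Fourier coefficients. Recall that for $f\in\PP\pa{\T^k}$,
$$\wh{f}\pa{n_1,\dots,n_k} = \int_{\T^k} f\pa{\theta_1,\dots,\theta_k}\, e^{-i\sum_{j=1}^k n_j \theta_j}\,\frac{d\theta_1\dots d\theta_k}{\pa{2\pi}^k},$$
and that $f$ being symmetric means $f\pa{\theta_{\sigma(1)},\dots,\theta_{\sigma(k)}}=f\pa{\theta_1,\dots,\theta_k}$ for every permutation $\sigma$ of $\br{1,\dots,k}$. We must show the analogous invariance $\wh{f}\pa{n_{\sigma(1)},\dots,n_{\sigma(k)}}=\wh{f}\pa{n_1,\dots,n_k}$ for all such $\sigma$.

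Fix $\sigma\in S_k$. First I would write out $\wh{f}\pa{n_{\sigma(1)},\dots,n_{\sigma(k)}}$ from the definition and perform the substitution $\varphi_l=\theta_{\sigma^{-1}(l)}$ (equivalently, relabel the integration variables by permuting them according to $\sigma$). Since this substitution is merely a permutation of the coordinates of $\T^k$, it preserves both the domain $\T^k$ and the normalised product measure $\frac{d\theta_1\dots d\theta_k}{\pa{2\pi}^k}$, its Jacobian being $\pm 1$. Under it the exponent transforms as $\sum_j n_{\sigma(j)}\theta_j=\sum_j n_{\sigma(j)}\varphi_{\sigma(j)}=\sum_l n_l\varphi_l$, while the symmetry of $f$ gives $f\pa{\theta_1,\dots,\theta_k}=f\pa{\varphi_{\sigma^{-1}(1)},\dots,\varphi_{\sigma^{-1}(k)}}=f\pa{\varphi_1,\dots,\varphi_k}$. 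Collecting these identities turns the integral back into $\wh{f}\pa{n_1,\dots,n_k}$, which is exactly the claimed symmetry.

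There is essentially no obstacle here: the only thing to watch is the bookkeeping between $\sigma$ and $\sigma^{-1}$ (whether the substitution acts on the variables or on the indices), but since the identity is required for every $\sigma\in S_k$ this distinction is immaterial — alternatively one may prove it first for transpositions, which generate $S_k$, and then extend by composition. Note also that positivity and normalisation play no role, so the statement holds verbatim for any symmetric $f\in L^1\pa{\T^k}$.
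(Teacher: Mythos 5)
Your proof is correct and is essentially the same argument as the paper's: a relabelling/change of variables by the permutation, invariance of the product measure, the symmetry of $f$, and a reindexing of the sum in the exponent. The minor $\sigma$ versus $\sigma^{-1}$ bookkeeping is, as you note, immaterial since the identity is required for all permutations.
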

	\begin{proof}
		For any permutation $\sigma\in S_{k}$ we have that
		$$\wh{f}\pa{n_1,\dots,n_k} = \int_{\T^k} f\pa{\theta_1,\dots,\theta_k}e^{-i\sum_{i=1}^k n_i \theta_i}\frac{d\theta_1\dots d\theta_k}{\pa{2\pi}^k}$$
		$$ = \int_{\T^k} f\pa{\theta_1,\dots,\theta_k}e^{-i\sum_{i=1}^k n_{\sigma(i)} \theta_{\sigma(i)}}\frac{d\theta_1\dots d\theta_k}{\pa{2\pi}^k}$$
		$$= \int_{\T^k} f\pa{\theta_{\sigma(1)},\dots,\theta_{\sigma(k)}}e^{-i\sum_{i=1}^k n_{\sigma(i)} \theta_{\sigma(i)}}\frac{d\theta_1\dots d\theta_k}{\pa{2\pi}^k}$$
		$$= \int_{\T^k} f\pa{\theta_{1},\dots,\theta_{k}}e^{-i\sum_{i=1}^k n_{\sigma(i)} \theta_{i}}\frac{d\theta_1\dots d\theta_k}{\pa{2\pi}^k} = \wh{f}\pa{n_{\sigma(1)},\dots,n_{\sigma(k)}} $$
		which concludes the proof. 
	\end{proof}
	
	\begin{proof}[Proof of Lemma \ref{lem:weak_convergence_and_fourier}]
			Since $e^{-i\sum_{j=1}^k n_j \theta_j} \in C_b\pa{\T^k}$, one implication is immediate. The converse is also straight forward and follows easily from the fact that trigonometric polynomials are dense in $C_b\pa{\T}$ with respect to the uniform norm\footnote{Recall that $\T$ stands for $\Sp^1$ which means that our continuous functions are $2\pi-$periodic.}. 
	\end{proof}
	
\end{appendix}



\end{document}